\documentclass[11pt,reqno]{amsart}

\usepackage[margin=1in]{geometry}
\usepackage{setspace}
\usepackage{amsmath,amssymb,amsfonts,mathtools,bm}
\usepackage{amsthm}
\usepackage{bbm}
\usepackage{dsfont}

\usepackage{graphicx}
\usepackage{subcaption}
\usepackage{booktabs}
\usepackage{array}
\usepackage{multirow}
\usepackage{longtable}
\usepackage{float}

\usepackage[ruled,vlined]{algorithm2e}

\usepackage[numbers,sort&compress]{natbib}
\usepackage[colorlinks=true,linkcolor=blue,citecolor=blue,urlcolor=blue]{hyperref}

\newtheorem{theorem}{Theorem}[section]
\newtheorem{proposition}[theorem]{Proposition}

\theoremstyle{definition}
\newtheorem{definition}[theorem]{Definition}
\newtheorem{remark}[theorem]{Remark}

\usepackage{placeins}

\newcommand{\R}{\mathbb{R}}

\newcommand{\dd}{\,\mathrm{d}}

\usepackage{titlesec}

\titleformat{\section}
 [block]
 {\normalfont\Large\bfseries}
 {\thesection.}
 {0.75em}
 {}

\titleformat{\subsection}
 [block]
 {\normalfont\bfseries}
 {\thesubsection.}
 {0.75em}
 {}

\titleformat{\subsubsection}
 [block]
 {\normalfont\itshape}
 {\thesubsubsection.}
 {0.75em}
 {}

\newcommand{\papertitle}{Surface-Driven Stochastic Volatility for Commodity Options: Identification of Stochastic Vol-of-Vol and Leverage from Smile Dynamics}
\begin{document}

\begin{center}
{\Large\bfseries \papertitle\par}
\vspace{0.8em}
{\large Arthur Steve Tchoneteck\textsuperscript{1}, Tingjia Zhang\textsuperscript{2}, and Frederi Viens\textsuperscript{1}\par}
\vspace{0.5em}
{\small
\textsuperscript{1}Department of Statistics, Rice University, Houston, TX 77005, USA\\
\textsuperscript{2}ENSTA Paris, Paris, France\\
\texttt{st83@rice.edu}; \texttt{tingjia.zhang@ensta.fr}; \texttt{fv15@rice.edu}
\par}
\vspace{1em}
\end{center}
\vspace{0.5em}

\begin{center}
\begin{minipage}{0.92\textwidth}
\small
\noindent\textbf{Abstract. }Commodity option surfaces contain information beyond the at-the-money volatility level. We develop a surface-driven stochastic-volatility framework for soybean futures options using daily  Chicago Mercantile Exchange Group Volatility Index (CME CVOL) indicators from October 2013 to August 2025. The ATM level and convexity are positive, right-skewed, and well described by log-Ornstein--Uhlenbeck dynamics, whereas the additive skew changes sign and is modeled on the real line by an Ornstein--Uhlenbeck process. These empirical features motivate a joint mean-reverting surface-factor system. Under a log-OU futures-volatility model, leading-order smile relations map the ATM level to the latent volatility state, convexity to an effective surface-implied volatility-of-volatility state, and skew to an effective leverage state. We derive the associated risk-neutral pricing equation and Feynman--Kac representation, benchmark prices by Monte Carlo simulation, validate the pricing PDE by finite differences, and train a neural-network surrogate for fast repeated valuation. The empirical results show substantial variation in surface states across market regimes, with the 2021 La Niña drought window exhibiting elevated volatility and skewness. The numerical results show close agreement between finite-difference and Monte Carlo prices over the validation grid, while the neural-network surrogate reproduces the pricing map at substantially lower inference cost. Overall, the results support the use of level, skew, and convexity jointly as dynamic inputs for commodity-option pricing and model diagnostics.
\end{minipage}
\end{center}

\small
\noindent \textbf{Keywords:} commodity options; implied volatility surface; stochastic volatility; log-OU process; volatility-of-volatility; leverage; CME CVOL; soybean futures.\\
\textbf{JEL classification:} G13, C58, C52, Q14.\\
\textbf{Data:} CME CVOL soybean daily surface indicators, October 2013--August 2025.

\vspace{0.5em}

\section{Introduction}
\label{sec:introduction}

\subsection{Background and motivation}
Commodity option markets provide a useful yet demanding laboratory for stochastic volatility modeling. Unlike equity index options, agricultural and energy options are tied to physical production, storage, transportation, weather, and calendar-specific information. For soybean futures, the option surface changes around planting, pollination, harvest reports, export news, global supply shocks, and predictable exceptional climate phases like the La~Ni\~na and El~Ni\~no phases of the  El~Ni\~no Southern Oscillation (ENSO) global climate mode\footnote{It is common, though not a rule, for a La~Ni\~na phase to precede an El~Ni\~no phase; in either case, any phase manifestation of ENSO is predicted several months in advance, and its intensity is also typically predictable with the same anticipation.}. Markets then  anticipate climate disruptions which will influence crop yields, and commodities supply and demand, affecting futures and options markets' volatility. 

These external forces can have complex effects on market uncertainty. They affect more than the at-the-money implied-volatility level. They also shift the skew and curvature of the smile, which are the parts of the surface that matter most for out-of-the-money options, tail risk protection, and hedging books exposed to non-linear payoffs. A standard way to price commodity options is to begin with a tractable stochastic-volatility model and then calibrate it to a cross-section of option prices. This practice has clear advantages. The Black futures option formula remains a natural benchmark for commodity derivatives \citep{Black1976}. Affine or nearly affine stochastic-volatility models, such as Heston, Stein--Stein, and Sch\"obel--Zhu, provide richer dynamics while keeping pricing relatively manageable \citep{Heston1993, SteinStein1991, SchobelZhu1999}. More general stochastic-volatility specifications also have a long history in option pricing \citep{HullWhite1987}. 

The difficulty is that mathematical convenience can dominate the choice of model. In that case, the model may fit a given option chain on one day but still miss the time-series structure of the implied-volatility surface. This issue is especially important in commodity markets. Commodity prices and option prices are shaped by mean reversion, seasonal uncertainty, and shocks that are not always spanned by futures returns alone. The classical commodity-pricing literature has emphasized the stochastic behavior of spot and futures prices, the role of mean reversion, and the practical implications for valuation and hedging \citep{Schwartz1997, CortazarSchwartz2003}. For agricultural markets, seasonality and the Samuelson effect are central empirical features \citep{Samuelson1965, RichterSorensen2002, SchneiderTavin2018}. Multi-factor commodity models also show that volatility can contain risks that are not fully captured by the futures curve \citep{TrolleSchwartz2009}. These findings suggest that a useful option-pricing model for soybeans should not only match prices at one date. It should also respect the observed dynamics of the volatility surface over time.

\subsection{From fitting an option chain to reading the surface}

 Instead of selecting a stochastic-volatility model first, we begin with the observed implied-volatility surface. We use daily CME CVOL soybean indicators from October 2013 to August 2025 and study four quantities: the 30-day at-the-money level $L_t$, the additive skew $S_t$, the skew ratio $R_t$, and the convexity measure $C_t$. These indicators, which are all part of CME CVOL, summarize the center, asymmetry, and curvature of the 30-day soybean implied-volatility surface. They are attractive for statistical modeling because they are available as daily time series and are constructed from the option surface rather than from one isolated option quote. The empirical evidence points to a heterogeneous surface. The at-the-money (ATM) level is positive, right-skewed, and strongly mean-reverting. The convexity indicator is also positive and mean-reverting. The additive skew is different: it can be positive or negative, and negative values have an economic interpretation, as they correspond to periods when downside protection is more expensive than the benefit of upside exposure. This sign behavior immediately rules out positive-only diffusions for the additive skew. Hence, the data do not support a single diffusion class for all surface components. They call for a mixed system which still respects mean reversion: for instance logarithmic Ornstein-Uhlenbeck (log-OU) dynamics for positive surface variables and Ornstein-Uhlenbeck (OU) dynamics for the signed skew, provide among the simplest models that respect mean reversion and these sign properties.

This observation leads to the paper's main idea. The implied-volatility surface is not only an object to be fitted after a model has been chosen. It can also be used as an input for model selection, parameter identification, and model testing. In particular, under a log-OU stochastic-volatility model for the futures volatility, the three observable indicators $(L_t, S_t, C_t)$ contain direct information about the latent volatility state, the leverage-vol-of-vol product, and the smile-implied vol-of-vol. This turns the surface into an identification system.

\subsection{Relation of our paper to the literature}
 The literature on commodity futures and commodity option modeling provides the background for this paper. The work of \citet{Schwartz1997} and \citet{CortazarSchwartz2003} established the importance of mean reversion and stochastic factors in commodity markets. For agricultural derivatives, \citet{RichterSorensen2002} and \citet{SchneiderTavin2018} document seasonal volatility patterns and crop-cycle effects, while \citet{TrolleSchwartz2009} shows that commodity derivatives can require volatility factors not spanned by the futures curve. Our paper is in this vein, but with a different focus. We use the observed implied-volatility surface itself to guide the stochastic-volatility specification.
 Heston's CIR variance model has a closed-form characteristic function due to its affine structure \citep{Heston1993}. The Stein--Stein and Sch\"obel--Zhu models use OU-type volatility dynamics, and they lead to tractable pricing formulas under suitable specifications \citep{SteinStein1991, SchobelZhu1999}. These models are powerful benchmarks, but their tractability comes from strong structural choices. In the log-OU model studied here, the log-volatility is mean-reverting, so volatility is positive and log-normal in the stationary distribution. This is empirically natural for positive volatility indicators, but it breaks the affine structure. The pricing problem, therefore, has to be handled using numerical methods such as Feynman--Kac-based Monte Carlo and finite-difference methods.

Looking at the dynamics of the implied-volatility surface, \citet{DumasFlemingWhaley1998} show that static implied-volatility functions can be unstable when used dynamically. \citet{Cont2002} emphasizes that implied-volatility surfaces have their own dynamics that should be modeled directly. \citet{Fengler2009} studies arbitrage-free smoothing of implied-volatility surfaces, while \citet{Bergomi2005} develops a forward-looking view of smile dynamics. Our paper shares the view that the surface is dynamic. The difference is that we use surface indicators not only to describe smile dynamics but also to identify the latent stochastic-volatility parameters used in option pricing.

This approach is also connected to the broader idea that one constant volatility-of-volatility parameter may be too restrictive. Stochastic vol-of-vol extensions of Heston-type dynamics were introduced to improve the joint calibration of equity and volatility-index options, for example, by fitting multifactor stochastic-volatility specifications to the term structure of the index-option smirk. SABR-type models also make clear that smile curvature is controlled by volatility-of-volatility, while correlation controls skew \citep{HaganKumarLesniewskiWoodward2002}. In the present paper, we do not introduce an additional vol-of-vol state only for calibration flexibility. We choose instead to remain more parsimonious and closer to surface-driven calibration: the state is read from the surface: convexity identifies an effective surface-implied vol-of-vol, while skew identifies an effective leverage coefficient.

 Statistical estimation and specification testing for continuous-time models also have an impact on the literature. Exact or likelihood-based estimation of discretely sampled diffusions is a central tool in financial econometrics \citep{AitSahalia2002}. Joint estimation under physical and risk-neutral measures has been studied in option-pricing settings \citep{ChernovGhysels2000, Pan2002}. Specification testing for continuous-time models is also important because a model can fit one dimension of the data while failing another \citep{HongLi2005}. This paper contributes to that point by developing an over-identification diagnostic based on the CVOL surface: the time-series dynamics of the ATM level imply one value of vol-of-vol, while smile curvature implies another. Because the first quantity is estimated from time-series dynamics and the second is an effective option-implied quantity, the gap should be interpreted as a test of their equality under a common-parameter restriction, not as a pure test of physical-measure misspecification. Rejection may reflect time variation in vol-of-vol, a volatility risk premium, jump compensation, or approximation error. The diagnostic is constructed explicitly in Appendix~\ref{app:spec-test}.

\subsection{Main Contributions}
\label{subsec:main_contribution}

We develop a surface-driven stochastic-volatility framework for soybean futures options. The main idea is that the implied-volatility surface should not be treated only as an object to fit after a model is chosen. Instead, the surface itself can guide model selection, identify relevant state variables, and provide daily dynamic inputs for pricing and hedging. 

Our first contribution is empirical and statistical. Using daily CME CVOL soybean indicators from October 2013 to August 2025, we show that the implied-volatility surface is heterogeneous. The ATM level and convexity are positive, right-skewed, and well described by log-OU dynamics. The additive skew, however, changes sign and must be modeled on the real line. This leads to a joint mean-reverting surface-factor model for
\[
(\log \mathcal L_t,\mathcal S_t,\log \mathcal C_t),
\]
where level, skew, and convexity are allowed to have different supports, different mean-reversion speeds, and correlated shocks. 

Our second contribution is mathematical. We prove a surface-identification result linking the CVOL indicators to the parameters of a log-OU stochastic-volatility model. The ATM level identifies the latent volatility state, the convexity indicator identifies an effective volatility-of-volatility process, and the skew indicator identifies an effective leverage process. In this sense, the surface not only describes option prices: it transforms the benchmark constant-parameter model into a dynamic system with stochastic volatility, stochastic vol-of-vol, and stochastic leverage:
\[
\mathcal L_t \longrightarrow \sigma_t,
\qquad
\mathcal C_t \longrightarrow \xi_t^{\mathrm{surf}},
\qquad
\mathcal S_t \longrightarrow \rho_t^{\mathrm{surf}}.
\]

Our third contribution is a pricing framework. We derive the risk-neutral pricing equation for the log-OU model and show how the surface-implied quantities can be used as pricing inputs. Since the raw surface-implied vol-of-vol and leverage may contain risk premia, jump compensation, and short-lived market noise, we treat them as effective risk-neutral state variables and regularize them before pricing. Monte Carlo simulation is used as the main offline numerical benchmark, while finite differences provide an independent cross-validation of the pricing PDE. Black--76 is used only to report prices in implied-volatility units. Finally, the paper provides a model-diagnostic interpretation. The constant-vol-of-vol log-OU model remains a useful benchmark, but a single constant parameter does not jointly reconcile the time-series dynamics of the ATM level with the option-implied curvature state under the identifying restriction used here. Because the two quantities need not coincide across physical and risk-neutral measures, the gap can also reflect volatility risk premia, jump compensation, or approximation error. The result nevertheless shows that soybean option smiles contain information beyond the ATM volatility level and motivates the surface-driven specification proposed in the paper.

\subsection{Organization of the Paper}
\label{subsec:organization}

Section~\ref{sec:data_empirical} introduces the CME CVOL soybean dataset, defines the surface indicators, and documents the main empirical facts. This section shows why level, skew, and convexity must be modeled separately: the ATM level and convexity are positive and mean-reverting, while the additive skew can be negative and therefore requires a model on the real line. Section~\ref{sec:factor_identification} develops the joint surface-factor model for
\[
(\log \mathcal L_t,\mathcal S_t,\log \mathcal C_t).
\]
It derives the exact joint transition law, the stationary covariance structure, and the effect of mean reversion on long-run correlations. The section then proves the main surface-identification theorem. This theorem shows how the CVOL surface identifies the latent volatility state, a surface-implied volatility-of-volatility process, and a surface-implied leverage process. Section~\ref{sec:pricing_numerics_validation} turns the identification result into an option-pricing framework. We derive the risk-neutral pricing equation, state the Feynman--Kac representation, and explain how the surface-implied quantities enter the pricing model. Monte Carlo simulation is used as the main pricing benchmark, while the finite-difference method is used as an independent validation tool on a grid of strikes and maturities. The same section also describes how prices are translated into Black--76 implied-volatility units and how hedging diagnostics are constructed. Section~\ref{sec:conclusion} concludes. It summarizes the main empirical and mathematical findings, discusses the role of surface-implied vol-of-vol and leverage in commodity option pricing, and outlines natural extensions, including seasonal dynamics, jump risk, and a fully dynamic risk-neutral model for the surface-implied state variables. It also discusses the timeliness of this paper in the context of the current expectation that the planet is entering a strong El Ni\~no event.  The appendices collect the transition-density derivations, a strengthened asymptotic derivation of the skew and convexity identification relations, the specification test announced above, the Monte Carlo algorithm, and the finite-difference implementation details.

\section{Data, Surface Indicators, and Empirical Evidence}
\label{sec:data_empirical}

This section introduces the CME CVOL soybean dataset, defines the surface indicators used throughout the paper, and documents the main empirical facts that motivate the stochastic-volatility model. The goal is not only to describe the data, but also to show which features of the implied-volatility surface must be respected by any pricing and hedging model. The evidence below points to three facts that are central to the rest of the paper: the surface is strongly
mean-reverting, its components have different distributional supports, and the level, skew, and convexity encode different risk factors.
\subsection{The CME CVOL Dataset}
\label{subsec:cvol_data}

Our primary data source is the CME Group Volatility Index (CVOL) for soybean futures, which CME Group publishes daily after the CBOT trading session closes. The CVOL methodology decomposes the 30-day model-free variance swap rate into surface components: a level, a skew, and a convexity, following the variance-risk decomposition of \citet{CarrWu2009}. Since these components are constructed from a broad cross-section of out-of-the-money calls and puts at each date, rather than from a single option contract at a single strike, they are less sensitive to bid--ask noise that affects individual option quotes. This makes them a natural object for studying the low-frequency dynamics of the implied-volatility surface.

The sample runs from October~1, 2013 to August~26, 2025, covering
$n = 2998$ trading days. There are no missing observations. The largest gap between consecutive dates is four calendar days, corresponding to holiday weekends, and all raw CVOL fields are complete throughout. Table~\ref{tab:summary_stats}
reports descriptive statistics for the raw CVOL fields and for the four surface indicators defined in Subsection~\ref{subsec:indicators}. Two features of the raw data are especially important. First, the underlying
soybean futures price ranged from 802.5 to 1{,}769 cents per bushel over the sample. This factor-of-two movement covers the post-COVID agricultural rally,
the 2021 La~Ni\~na drought, and the 2022 Russia--Ukraine supply shock. Second,
the 30-day ATM implied volatility ranged from 9.70\% in November 2017 to
39.51\% on June~29, 2021. This factor-of-four range shows that agricultural
option markets can move sharply over short horizons. Any useful model of the
soybean implied-volatility surface must therefore allow for persistence,
mean reversion, and regime-sensitive variation in volatility.

\begin{table}[H]
\centering
\caption{Descriptive statistics for the CME CVOL soybean dataset,
October~1, 2013 to August~26, 2025 ($n = 2{,}998$ trading days).
Futures prices are in cents per bushel; all volatility figures are in annualized
percentage points. Skewness and excess kurtosis use standard moment definitions.
\% Neg.\ reports the share of observations with a negative value. All figures in
this table were independently recomputed from the raw CVOL feed and match to the
reported precision.}
\label{tab:summary_stats}
\setlength{\tabcolsep}{4.2pt}
\renewcommand{\arraystretch}{1.15}
\resizebox{\textwidth}{!}{%
\begin{tabular}{lrrrrrrrrrr}
\toprule
Variable & Mean & Std & Min & $Q_{25}$ & Median & $Q_{75}$ & Max
 & Skew & Ex.\ Kurt & \% Neg. \\
\midrule
Futures price (\textcent/bu)
 & 1117.1 & 226.5 & 802.5 & 938.3 & 1028.5 & 1316.1 & 1769.0
 & 0.74 & $-$0.65 & 0.0 \\[2pt]
Total variance ($\mathrm{Var}_{30}$)
 & 20.01 & 4.66 & 10.55 & 16.91 & 19.16 & 22.39 & 43.19
 & 0.90 & 1.21 & 0.0 \\
ATM vol\; $\mathcal{L}_t$
 & 18.60 & 4.41 & 9.70 & 15.71 & 17.86 & 21.08 & 39.51
 & 0.81 & 0.92 & 0.0 \\
Upside var.\; ($\mathrm{UpVar}_{30}$)
 & 20.94 & 5.23 & 10.65 & 17.46 & 19.86 & 23.49 & 45.66
 & 1.02 & 1.37 & 0.0 \\
Downside var.\; ($\mathrm{DnVar}_{30}$)
 & 18.98 & 4.21 & 10.33 & 16.33 & 18.34 & 21.30 & 43.76
 & 0.76 & 1.06 & 0.0 \\[2pt]
Additive skew\; $\mathcal{S}_t$
 & 1.958 & 2.149 & $-$3.868 & 0.509 & 1.440 & 3.095 & 12.812
 & 1.01 & 1.37 & \textbf{14.5} \\
Skew ratio\; $\mathcal{R}_t$
 & 1.102 & 0.103 & 0.852 & 1.029 & 1.084 & 1.166 & 1.565
 & 0.74 & 0.41 & 0.0 \\
Convexity\; $\mathcal{C}_t$
 & 1.077 & 0.031 & 0.997 & 1.055 & 1.071 & 1.096 & 1.236
 & 0.89 & 1.20 & 0.0 \\
\bottomrule
\end{tabular}}
\end{table}

\subsection{Surface Indicator Definitions}
\label{subsec:indicators}

The raw CVOL feed delivers daily fields including $\mathrm{Atm30}$,
$\mathrm{Var30}$, $\mathrm{UpVar30}$, $\mathrm{DnVar30}$,
$\mathrm{Skew1\_30}$, $\mathrm{Skew2\_30}$, and $\mathrm{Conv30}$.
We work with four derived surface indicators. The relationships below were
verified numerically; the maximum absolute residual against the raw fields does
not exceed $10^{-4}$, which is consistent with rounding in the published data.

\begin{definition}[CME CVOL surface indicators]
\label{def:indicators}
The four 30-day surface indicators are defined as
\begin{align}
 \mathcal{L}_t &:= \mathrm{Atm30}_t,
 \label{eq:level_def}\\
 \mathcal{S}_t &:= \mathrm{UpVar}_{30,t} - \mathrm{DnVar}_{30,t},
 \label{eq:skew_def}\\
 \mathcal{R}_t &:= \frac{\mathrm{UpVar}_{30,t}}{\mathrm{DnVar}_{30,t}},
 \label{eq:skewratio_def}\\
 \mathcal{C}_t &:= \frac{\mathrm{Var}_{30,t}}{\mathcal{L}_t}.
 \label{eq:conv_def}
\end{align}
\end{definition}

The four indicators have different financial meanings.

\medskip\noindent

\textbf{Level $\mathcal{L}_t$:}
The 30-day ATM implied volatility is the most familiar single-number summary of the option surface. It is always strictly positive. In our sample, it is right-skewed, with sample skewness 0.81 and excess kurtosis 0.92. Both the Jarque--Bera and Shapiro--Wilk tests reject normality at conventional levels, confirming that the level process has heavier right tails than a Gaussian. This is one reason why a log-OU specification is a natural candidate for the level.

\textbf{Additive skew $\mathcal{S}_t$:}
This indicator measures the difference between upside and downside variance in the same units as the ATM volatility. Positive values indicate stronger call-side variance demand, which often appears during pre-harvest drought risk periods when the right tail of the soybean price distribution becomes more important. Negative values indicate the opposite: the market prices more downside risk
than upside variance, often after a large confirmed harvest, when put protection becomes more valuable. In the sample, $\mathcal{S}_t$ is negative on \textbf{434 of the 2{,}998 trading
days}, or 14.5\% of the sample. The most extreme negative episode occurred in late September 2014, when the US harvest confirmed record bushel yields, and the nearby soybean futures contract fell from about 940 cents to below 830 cents in under two weeks. This sign behavior is directly informative for model selection:
positive-only diffusions are not compatible with $\mathcal{S}_t$.

\textbf{Skew ratio $\mathcal{R}_t$:}
The ratio $\mathrm{UpVar}/\mathrm{DnVar}$ measures the same asymmetry as $\mathcal{S}_t$, but in multiplicative form. It is always positive, ranging from 0.852 to 1.565. The sample correlation with $\mathcal{S}_t$ is 0.951, indicating that the two representations carry nearly the same empirical information. We keep $\mathcal{R}_t$ because its positivity makes it useful for comparing additive and
Multiplicative descriptions of skew.

\textbf{Convexity $\mathcal{C}_t$:}
The ratio $\mathrm{Var}_{30}/\mathcal{L}_t$ measures how much of the total 30-day variance comes from out-of-the-money options beyond the ATM contribution. When $\mathcal{C}_t = 1$, the smile is approximately flat. Values above one indicate meaningful smile curvature. In the sample, $\mathcal{C}_t$ ranges from 0.997 to 1.236, with a mean of 1.077 and a standard deviation of 0.031. Its distribution is tight but strongly non-normal, reflecting occasional bursts of curvature during stress periods. Because $\mathcal{C}_t$ occasionally dips slightly below one (its minimum is 0.997), the convexity-based vol-of-vol recovery in Section~\ref{sec:factor_identification} uses a positive-part correction; see Remark~\ref{rem:convexity_floor}.

\begin{figure}[H]
\centering
\includegraphics[width=0.9\textwidth]{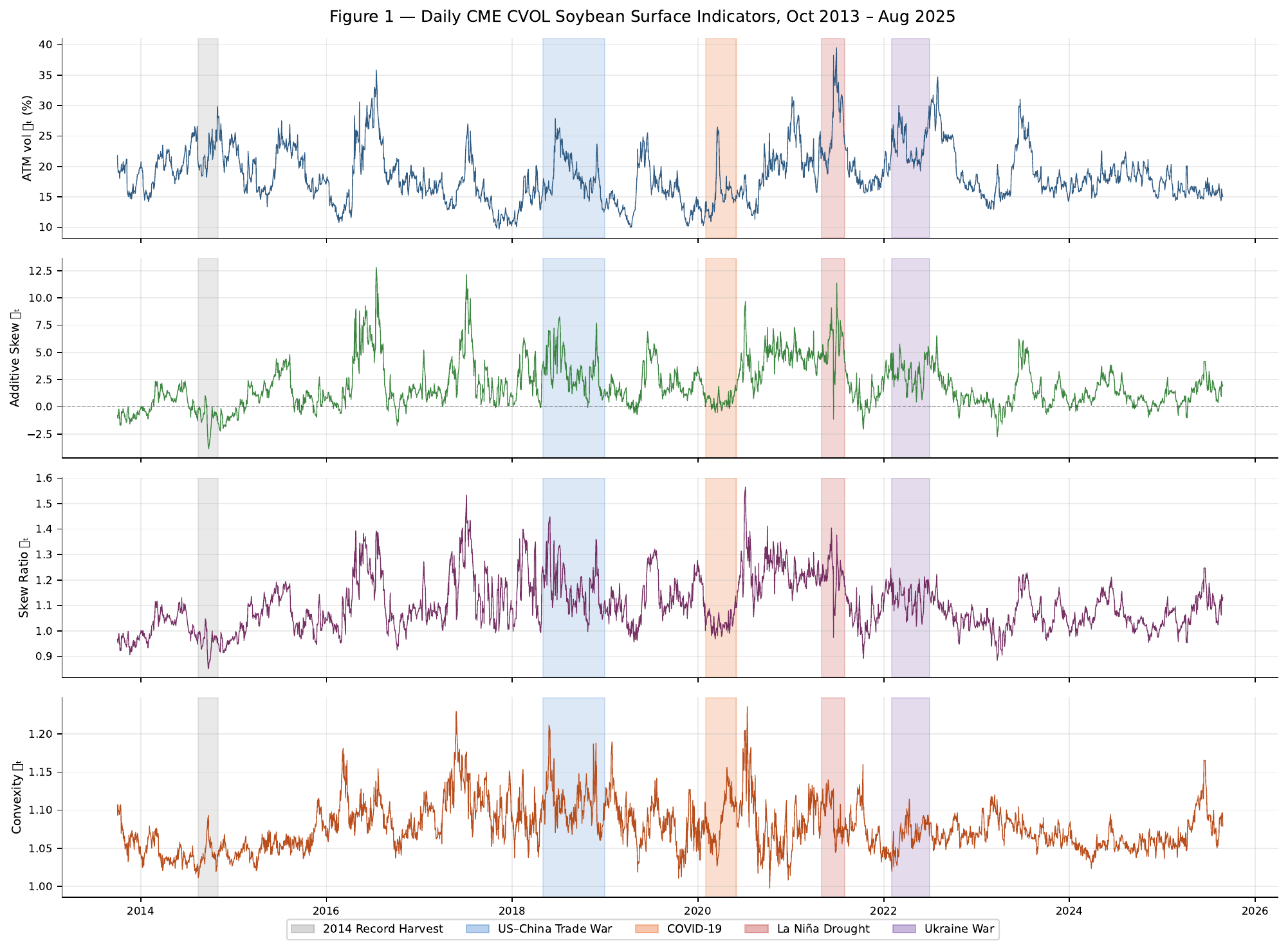}
\caption{Daily CME CVOL soybean surface indicators, October~2013 to
August~2025. Top: ATM implied volatility level $\mathcal{L}_t$ (\%).
Second: additive skew $\mathcal{S}_t$; the dashed line marks zero.
Third: skew ratio $\mathcal{R}_t$. Bottom: convexity $\mathcal{C}_t$.
Shaded regions mark key market episodes: the 2014 record US harvest,
the US-China trade war, COVID-19 supply disruption, the 2021 La~Ni\~na
drought, and the 2022 Russia-Ukraine supply shock.}
\label{fig:time_series}
\end{figure}
\subsection{Empirical Study}
\label{subsec:stylized_facts}

\subsubsection{Mean reversion and persistence}
\label{subsubsec:stationarity}

Figure~\ref{fig:time_series} plots the full daily time series of all four indicators from October 2013 to August 2025. The visual evidence already suggests mean reversion. All four series fluctuate around stable long-run levels instead of drifting without bound, and large deviations such as the level spike to 39.5\% in June 2021 or the negative-skew cluster in late 2014 resolve within weeks
or months.

To make the persistence structure precise, we apply two formal tests to each indicator. The augmented Dickey--Fuller (ADF) test takes the unit root as its null hypothesis, with the lag order selected by AIC. The KPSS test \citep{kwiatkowski1992testing} reverses the roles and takes stationarity as its null. Together, the two tests provide a useful bracketing assessment: a series that rejects the ADF null and fails to reject the KPSS null provides strong evidence of stationarity. Table~\ref{tab:stationarity} reports the results.

\begin{table}[H]
\centering
\caption{Stationarity tests for the four surface indicators ($n = 2998$).
ADF: augmented Dickey--Fuller test, $H_0$: unit root; lag order chosen by AIC
with constant included. KPSS: Kwiatkowski--Phillips--Schmidt--Shin test,
$H_0$: stationary, constant specification, automatic lag selection. Critical
values are for the 5\% level. Superscripts $^{*}$ and $^{**}$ denote rejection
at 5\% and 1\%, respectively.}
\label{tab:stationarity}
\renewcommand{\arraystretch}{1.2}
\begin{tabular}{lcccccc}
\toprule
& \multicolumn{3}{c}{ADF\quad ($H_0$: unit root)}
& \multicolumn{3}{c}{KPSS\quad ($H_0$: stationary)} \\
\cmidrule(lr){2-4}\cmidrule(lr){5-7}
Indicator & Stat. & Lags & $p$-value
 & Stat. & CV\;5\% & $p$-value \\
\midrule
Level\; $\mathcal{L}_t$
 & $-5.88$ & 5 & $<0.001^{**}$
 & $0.288$ & $0.463$ & $>0.10\phantom{^{*}}$ \\
Additive skew\; $\mathcal{S}_t$
 & $-5.40$ & 9 & $<0.001^{**}$
 & $0.637$ & $0.463$ & $0.019^{*}$ \\
Skew ratio\; $\mathcal{R}_t$
 & $-5.74$ & 9 & $<0.001^{**}$
 & $0.824$ & $0.463$ & $<0.010^{**}$ \\
Convexity\; $\mathcal{C}_t$
 & $-7.11$ & 5 & $<0.001^{**}$
 & $0.845$ & $0.463$ & $<0.010^{**}$ \\
\bottomrule
\end{tabular}
\end{table}

The ADF test rejects the unit-root null for all four indicators at the 0.1\% level, which supports mean reversion. The KPSS evidence is more nuanced. For the level $\mathcal{L}_t$, both tests agree: the ADF rejects a unit root and the KPSS does not reject stationarity. For the skew and convexity indicators, the KPSS statistic exceeds the 5\% critical value while the ADF still strongly rejects the unit-root null. This apparent conflict is common in the presence of near integration or structural breaks \citep{LeeHuangShin1997}. In this application, the sample spans several distinct market regimes, so slow changes in the unconditional mean can inflate the KPSS statistic without implying a unit root. The visual
evidence, the ADF results, and the half-life estimates below are therefore consistent with stationary but regime-sensitive mean reversion.
Table~\ref{tab:halflife} translates persistence into half-lives and annualized
mean-reversion speeds implied by the lag-1 autocorrelation.

\begin{table}[H]
\centering
\caption{Autocorrelation structure and estimated half-lives. $\rho(\ell)$ denotes
the sample lag-$\ell$ autocorrelation. Half-life is
$-\ln 2/\ln \rho(1)$ in trading days. The annualized mean-reversion proxy is
$\hat\kappa = -252\ln\rho(1)$.}
\label{tab:halflife}
\renewcommand{\arraystretch}{1.2}
\begin{tabular}{lcccccccr}
\toprule
Indicator & $\rho(1)$ & $\rho(5)$ & $\rho(10)$ & $\rho(22)$
 & $\rho(66)$ & $\rho(126)$ & Half-life (d) & $\hat\kappa$ \\
\midrule
Level\; $\mathcal{L}_t$
 & 0.972 & 0.876 & 0.780 & 0.583 & 0.153 & 0.025 & 24.0 & 7.27 \\
Additive skew\; $\mathcal{S}_t$
 & 0.955 & 0.811 & 0.731 & 0.558 & 0.151 & 0.055 & 15.1 & 11.56 \\
Skew ratio\; $\mathcal{R}_t$
 & 0.951 & 0.795 & 0.709 & 0.538 & 0.212 & 0.133 & 13.9 & 12.57 \\
Convexity\; $\mathcal{C}_t$
 & 0.907 & 0.763 & 0.651 & 0.482 & 0.321 & 0.209 & 7.1 & 24.52 \\
\bottomrule
\end{tabular}
\end{table}

The table reveals a clear hierarchy. The ATM level is the most persistent, with a half-life of roughly one trading month. The skew reverts faster, with a half-life of about three trading weeks. Convexity is the fastest-reverting component, with a half-life slightly above one week. Economically, this ordering is natural. Broad uncertainty about crop outcomes tends to resolve over weeks,
while far-wing curvature can adjust quickly to new weather reports, crop progress news, or sudden changes in demand for tail protection.
All four indicators retain economically meaningful autocorrelation at the one-quarter horizon. This rules out white-noise models and very fast mean reversion. At the six-month horizon, the level autocorrelation is close to zero, which is consistent with convergence to a stationary distribution within a crop
year. Convexity, however, still shows non-trivial persistence, suggesting that its dynamics may contain more than one timescale.

\subsubsection{Distributional properties and the sign of the skew}
\label{subsubsec:distribution}

Figure~\ref{fig:distributions} shows the empirical density of each indicator alongside a fitted Gaussian curve. All four distributions depart from normality. The level and skew are positively skewed with heavy right tails, consistent with infrequent but large upward moves. The skew ratio and convexity exhibit lighter tails, but they are still non-Gaussian.

\begin{figure}[H]
\centering
\includegraphics[width=\textwidth]{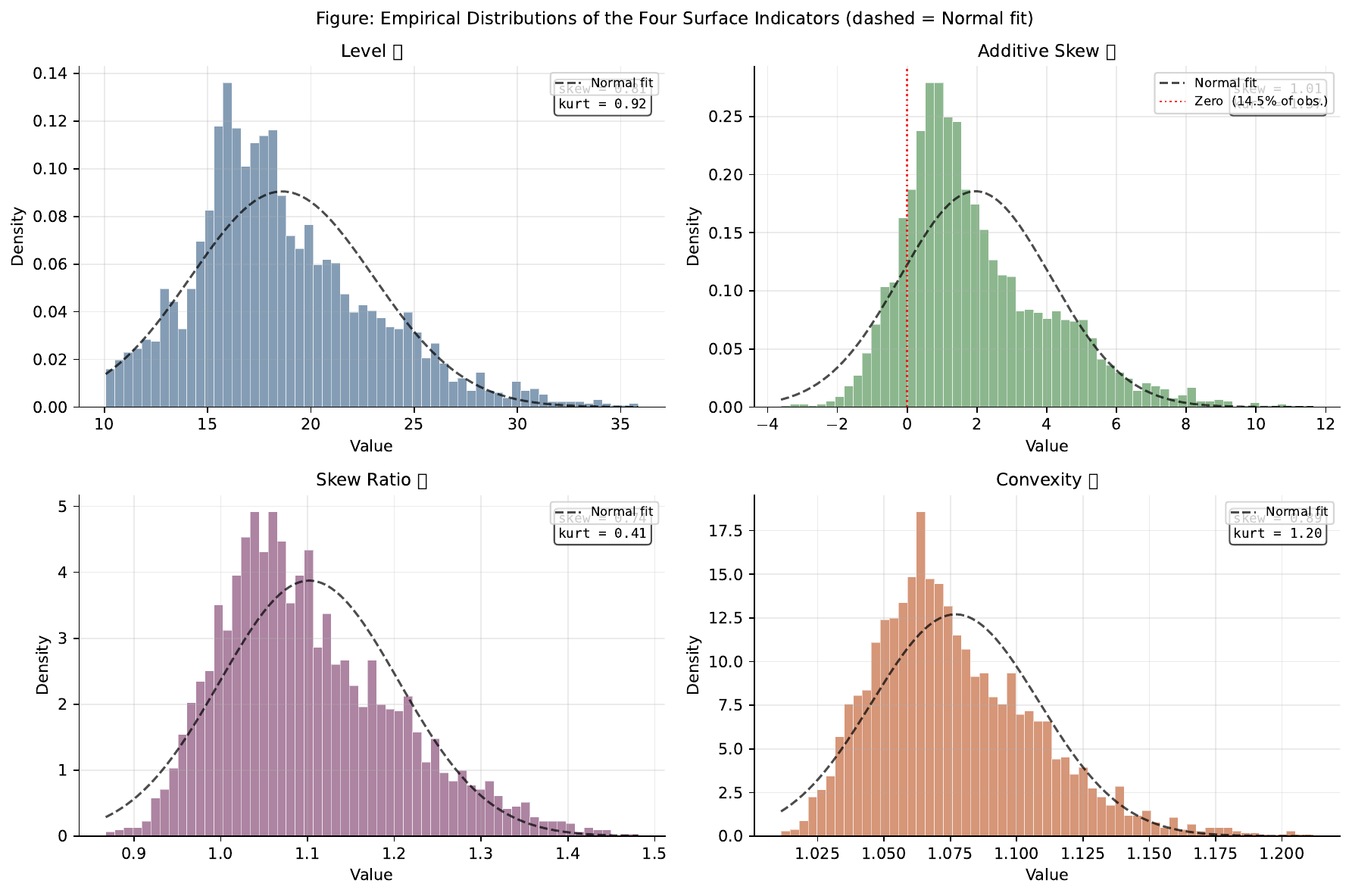}
\caption{Empirical density histograms for the four surface indicators, with a fitted Gaussian curve for comparison. The red dotted vertical line in the additive-skew panel marks zero, and the annotation reports the share of trading days for which $\mathcal{S}_t<0$. All panels use the October~2013 to August~2025
sample.}
\label{fig:distributions}
\end{figure}
The most important feature in Figure~\ref{fig:distributions} is the support of the additive skew. On 434 trading days, or 14.5\% of the sample, $\mathcal{S}_t$
is negative. Table~\ref{tab:neg_skew} reports the frequency of negative skew by calendar year.

\begin{table}[H]
\centering
\caption{Frequency of negative additive skew ($\mathcal{S}_t < 0$) by calendar year. The column ``\% of year'' reports the share of \emph{that year's own trading days present in the sample} on which $\mathcal{S}_t < 0$. Because the sample begins on October~1, 2013, the 2013 row covers only 64 trading days (Oct.--Dec.), not a full calendar year; all 64 of those days have negative skew, so the 2013 figure is 100.0\%, not a partial-year-weighted average. Likewise the 2025 row covers only 163 trading days (through Aug.~26).
}
\label{tab:neg_skew}
\renewcommand{\arraystretch}{1.1}
\begin{tabular}{lrr|lrr}
\toprule
Year & Days & \% of year & Year & Days & \% of year \\
\midrule
2013 & 64 & 100.0 & 2020 & 13 & 5.1 \\
2014 & 139 & 55.2 & 2021 & 23 & 9.1 \\
2015 & 45 & 17.7 & 2022 & 6 & 2.4 \\
2016 & 33 & 13.1 & 2023 & 50 & 20.0 \\
2017 & 2 & 0.8 & 2024 & 31 & 12.3 \\
2018 & 3 & 1.2 & 2025 & 10 & 6.1 \\
2019 & 15 & 6.0 & \textbf{Total} & \textbf{434} & \textbf{14.5} \\
\bottomrule
\end{tabular}
\end{table}
The global minimum is $\mathcal{S}_t=-3.868$, recorded on September~25, 2014. That week, USDA production estimates confirmed a historically large crop, and nearby soybean futures fell sharply. The option market repriced toward the downside
risk, so put-side variance became more expensive relative to call-side variance. This has a direct modeling implication. Any diffusion constrained to be strictly positive, such as CIR, log-OU, or geometric Brownian motion, is incompatible with the observed behavior of $\mathcal{S}_t$. The Ornstein--Uhlenbeck process, which
evolves on $\R$, is the only member of our candidate family that can naturally accommodate this sign-changing skew indicator. This is not a modeling preference; it is a constraint imposed by the data.

\subsubsection{Seasonality}
\label{subsubsec:seasonality}

Table~\ref{tab:seasonal} reports monthly averages of the level and additive skew, pooled across all years. Figure~\ref{fig:seasonality} gives the same information graphically.

\begin{table}[H]
\centering
\caption{Average ATM volatility level $\bar{\mathcal{L}}$ and additive skew $\bar{\mathcal{S}}$ by calendar month, pooled across October~2013 to August~2025. ``Std'' is the within-month standard deviation of the level.}
\label{tab:seasonal}
\renewcommand{\arraystretch}{1.2}
\begin{tabular}{lrrr|lrrr}
\toprule
Month & $\bar{\mathcal{L}}$ (\%) & Std & $\bar{\mathcal{S}}$
& Month & $\bar{\mathcal{L}}$ (\%) & Std & $\bar{\mathcal{S}}$ \\
\midrule
January & 17.4 & 4.4 & 1.45
& July & 22.7 & 5.5 & 4.41 \\
February & 16.5 & 3.8 & 1.52
& August & 19.8 & 4.0 & 1.82 \\
March & 18.1 & 4.1 & 1.59
& September& 19.1 & 2.9 & 1.39 \\
April & 16.9 & 3.5 & 1.48
& October & 18.2 & 3.3 & 0.83 \\
May & 18.2 & 3.1 & 2.52
& November & 17.2 & 3.6 & 1.19 \\
June & 21.9 & 5.2 & 4.02
& December & 17.2 & 3.4 & 1.19 \\
\bottomrule
\end{tabular}
\end{table}

\begin{figure}[H]
\centering
\includegraphics[width=\textwidth]{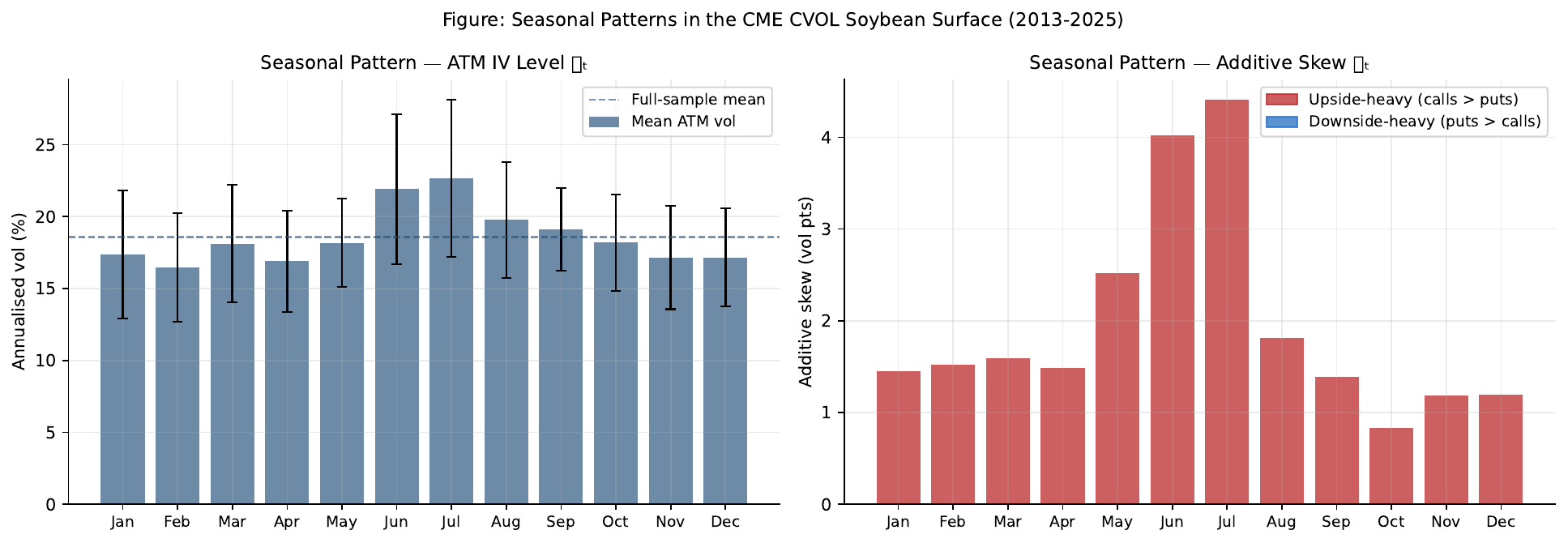}
\caption{Seasonal averages of the ATM volatility level $\mathcal{L}_t$ and the
additive skew $\mathcal{S}_t$, pooled across October~2013 to August~2025. The
pattern is consistent with the North American soybean crop calendar.}
\label{fig:seasonality}
\end{figure}

The seasonal pattern is pronounced. Volatility and skew peak in June and July, which correspond to the critical pollination window. During this period, temperature and precipitation uncertainty translate directly into price risk. The July ATM volatility average is 22.7\%, about 38\% above the February average of 16.5\%. The skew peaks even more strongly: the July average of 4.41 is more than five times the October average of 0.83. This reflects the concentration of upside demand for drought-risk calls during the growing season and its decline after the harvest outcome becomes clearer.

These seasonal patterns are well known in agricultural option markets. \citet{RichterSorensen2002} and \citet{SchneiderTavin2018}, for example, document seasonality in soybean and other agricultural futures and options. In the present paper, we keep the main model stationary and abstract from deterministic seasonality in the drift. This simplifies the identification and pricing analysis, but it also points to a natural extension discussed in the conclusion.

\subsubsection{Cross-indicator correlation structure}
\label{subsubsec:cross_corr}

Table~\ref{tab:crosscorr} and Figure~\ref{fig:corr_heatmap} report the pairwise
Pearson correlations among the four indicators.

\begin{table}[H]
\centering
\caption{Pairwise Pearson correlation matrix of the four surface indicators
($n=2{,}998$, October~2013 to August~2025).}
\label{tab:crosscorr}
\renewcommand{\arraystretch}{1.2}
\begin{tabular}{lcccc}
\toprule
& $\mathcal{L}$ & $\mathcal{S}$ & $\mathcal{R}$ & $\mathcal{C}$ \\
\midrule
Level\; $\mathcal{L}$
 & \phantom{$-$}1.000 & \phantom{$-$}0.437
 & \phantom{$-$}0.215 & $-0.230$ \\
Additive skew\; $\mathcal{S}$
 & \phantom{$-$}0.437 & \phantom{$-$}1.000
 & \phantom{$-$}0.951 & \phantom{$-$}0.354 \\
Skew ratio\; $\mathcal{R}$
 & \phantom{$-$}0.215 & \phantom{$-$}0.951
 & \phantom{$-$}1.000 & \phantom{$-$}0.427 \\
Convexity\; $\mathcal{C}$
 & $-0.230$ & \phantom{$-$}0.354
 & \phantom{$-$}0.427 & \phantom{$-$}1.000 \\
\bottomrule
\end{tabular}
\end{table}

\begin{figure}[H]
\centering
\includegraphics[width=0.5\textwidth]{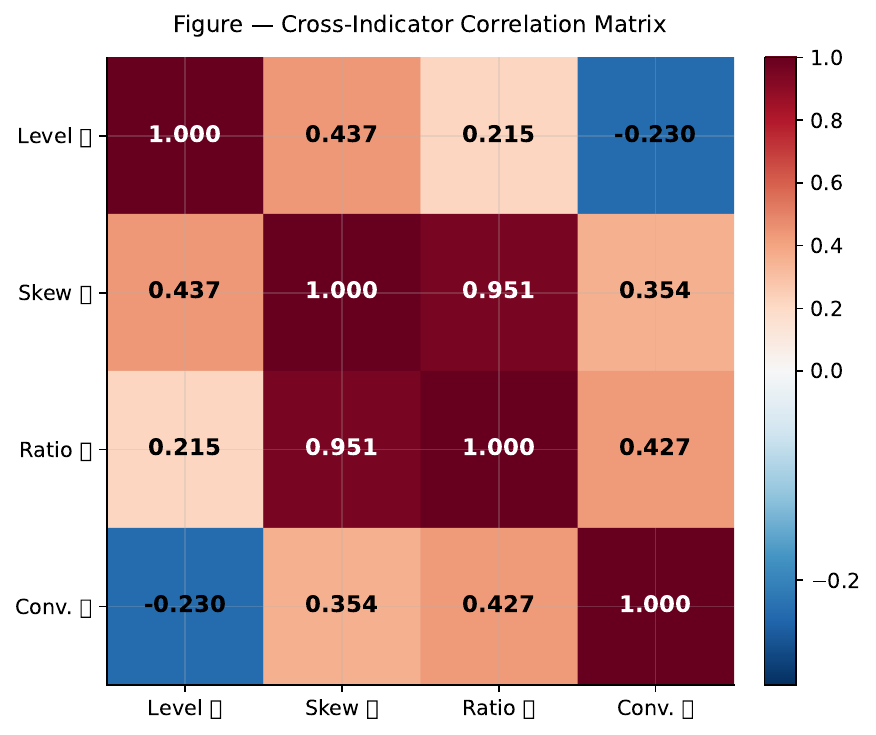}
\caption{Pairwise Pearson correlation heatmap for the four surface indicators.
The diagonal entries are one by construction.}
\label{fig:corr_heatmap}
\end{figure}

The correlation matrix contains three important messages. First, the correlation between $\mathcal{S}_t$ and $\mathcal{R}_t$ is 0.951, so the additive and multiplicative measures of skew are almost interchangeable empirically. Second, the correlation between the level and additive skew is positive, equal to 0.437.
This reflects a common agricultural-market pattern: high-volatility regimes such as drought scares, supply shocks, and trade-war uncertainty often come with stronger call-side demand. Third, the correlation between the level and convexity is negative, equal to $-0.230$. This is structurally important. When ATM volatility rises, the wings do not always reprice proportionally, so the smile can
become relatively flatter even though the whole surface is more expensive.

Figure~\ref{fig:rolling_corr} shows the rolling 252-day correlations between the
level and skew, and between the level and convexity.

\begin{figure}[H]
\centering
\includegraphics[width=\textwidth]{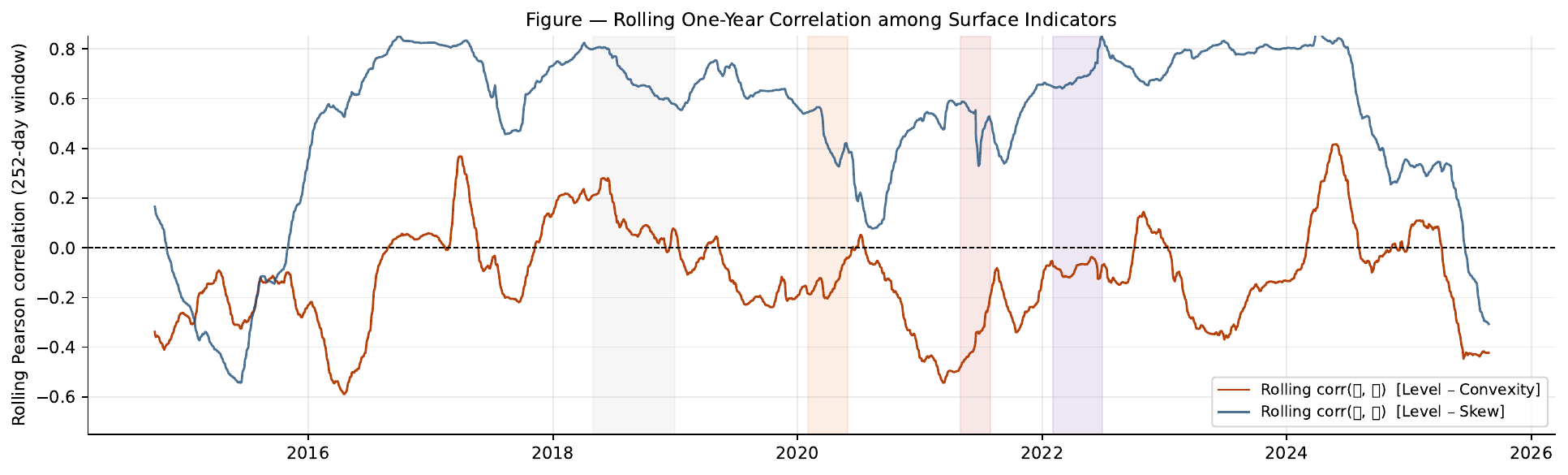}
\caption{Rolling 252-day Pearson correlations between
$(\mathcal{L}_t,\mathcal{C}_t)$ and $(\mathcal{L}_t,\mathcal{S}_t)$.
The dashed line marks zero. The level--convexity correlation is persistently negative across most of the sample, while the level--skew correlation is mostly positive but varies across market regimes.}
\label{fig:rolling_corr}
\end{figure}
The rolling correlations show that the level and curvature are driven by distinct risk factors. This is one motivation for treating the level, skew, and convexity as separate but correlated processes in the heterogeneous surface factor model developed in Section~\ref{sec:factor_identification}.

\subsection{Diffusion Model Selection for Surface Indicators}
\label{subsec:model_selection}

The central question of this section is a simple to state, but harder to answer: given a time series of daily surface
indicators, which mean-reverting diffusion fits best?
We consider three candidate processes, derive their exact transition
densities so that we can write down the likelihood without any
approximation, estimate each model by maximum likelihood, and then
compare fits using the Akaike and Bayesian information criteria.
The answer turns out to differ across indicators in a way that
has direct implications for model choice in the pricing sections
that follow.

\subsubsection{The Candidate Model Family}
\label{subsubsec:candidate_models}

We compare three parsimonious mean-reverting diffusion specifications. The OU and CIR models can be written as
\begin{equation}
 d X_t = \kappa(\theta - X_t)dt + \xi g(X_t) d W_t,
 \label{eq:general_sde}
\end{equation}
with $g(x)=1$ and $g(x)=\sqrt{x}$, respectively. The log-OU model is specified instead in logarithmic coordinates. Here $\kappa>0$ is the speed of mean reversion, $\theta$ is the long-run level in the relevant coordinate, and $\xi>0$ governs the noise intensity.

\begin{definition}[The Three Candidate Processes]
\label{def:candidates}
The three models correspond to three choices \begin{align}
dX_t &= \kappa(\theta-X_t)dt+\xi dW_t,
&& \text{OU},\\
dX_t &= \kappa(\theta-X_t)dt+\xi \sqrt{X_t}\,dW_t,
&& \text{CIR},\\
d\log X_t &= \kappa(\theta^{\log}-\log X_t)dt+\xi dW_t,
&& \text{log-OU}.
\end{align}
\end{definition}
The OU process adds Gaussian shocks to the mean-reverting drift.
Because the noise is additive and independent of the current level,
the process can wander into negative values, a feature that
most volatility models try to avoid, but which turns out to be
empirically necessary for the additive skew indicator $\mathcal{S}_t$.
The CIR process scales the noise by $\sqrt{X_t}$, which keeps
the process positive as long as the Feller condition
$2\kappa\theta \geq \xi^2$ holds; it arises naturally in
interest-rate theory \citep{CoxIngersollRoss1985} and has been used as
a variance process in the Heston model \citep{Heston1993}.
The log-OU process is specified directly by making $\ln X_t$ an OU process. Consequently, $X_t$ remains positive and has a state-dependent diffusion coefficient proportional to $X_t$ when written in levels, while its level drift is nonlinear because of the log transformation. This is a natural specification for a positive quantity whose fluctuations tend to scale with its level, as one might expect for implied volatility.

\subsubsection{Exact Transition Densities}
\label{subsubsec:transition_densities}

A useful feature of these three models is that each admits
an exact closed-form transition density, which means we can
write down the log-likelihood without any discretization
or approximation error.
With $n = 2{,}997$ daily transitions (from 2{,}998 observations)
and a daily time step of $\Delta t = 1/252$, this matters.
Full derivations of the three densities below are collected in
Appendix~\ref{app:transition-density-proofs}; here we record the
results needed for estimation.

\paragraph{Ornstein--Uhlenbeck.}
\label{subsubsec:ou_density}

The OU SDE has the explicit solution
\begin{equation}
 X_{t+h} = \theta + (X_t - \theta)\,e^{-\kappa h}
 + \xi\int_t^{t+h} e^{-\kappa(t+h-s)}d W_s,
 \label{eq:ou_solution}
\end{equation}
and since the stochastic integral is a Gaussian with zero mean,
the transition law is
\begin{equation}
 X_{t+h} \mid X_t \;\sim\;
 N\!\left(
 \mu_t^{OU},\; s^2_{OU}
 \right),
 \label{eq:ou_transition}
\end{equation}
where
\begin{equation}
 \mu_t^{OU} = \theta + (X_t - \theta)\,e^{-\kappa h},
 \qquad
 s^2_{OU} = \frac{\xi^2}{2\kappa}\bigl(1 - e^{-2\kappa h}\bigr).
 \label{eq:ou_moments}
\end{equation}
Both the conditional mean and the conditional variance are known
analytically.
The mean decays exponentially from $X_t$ toward $\theta$ at rate $\kappa$,
and the variance saturates at the stationary variance
$\xi^2/(2\kappa)$ as $h \to \infty$.
The exact log-likelihood for a sample $\{X_{t_0}, X_{t_1}, \ldots, X_{t_n}\}$
with $h = \Delta t$ is therefore
\begin{equation}
 \ell^{OU}(\kappa, \theta, \xi) =
 \sum_{i=0}^{n-1}
 \ln \phi\!\left(\frac{X_{t_{i+1}} - \mu_{t_i}^{OU}}{s_{OU}}\right)
 - \ln s_{OU},
 \label{eq:ou_loglik}
\end{equation}
where $\phi$ is the standard normal density.

The stationary distribution of the OU process is
\begin{equation}
 X_\infty \;\sim\; N\!\left(\theta,\;\frac{\xi^2}{2\kappa}\right),
 \label{eq:ou_stationary}
\end{equation}
which is Gaussian and, in particular, symmetric around $\theta$
and supported on all of $\R$.

\paragraph{Cox--Ingersoll--Ross.}
\label{subsubsec:cir_density}

For the CIR process, the transition density
is not Gaussian. Conditioned on $X_t$, the random variable $X_{t+h}$ follows a scaled non-central chi-squared distribution.
Specifically, let
\begin{equation}
 c = \frac{2\kappa}{\xi^2\,(1 - e^{-\kappa h})},
 \qquad
 q = \frac{2\kappa\theta}{\xi^2} - 1,
 \qquad
 u = c\,X_t\,e^{-\kappa h},
 \qquad
 v = c\,X_{t+h}.
 \label{eq:cir_uvq}
\end{equation}
Then the transition density is
\begin{equation}
 p_{CIR}(X_{t+h} \mid X_t) = c\,
 \left(\frac{v}{u}\right)^{q/2}
 e^{-(u+v)}\,
 I_q\!\left(2\sqrt{uv}\right),
 \label{eq:cir_density}
\end{equation}
where $I_q(\cdot)$ is the modified Bessel function of the first kind
of order $q$ (derivation in Appendix~\ref{app:transition-density-proofs}).
In log form:
\begin{equation}
 \ln p_{CIR} =
 \ln c + \tfrac{q}{2}\ln\!\tfrac{v}{u} - u - v +
 \ln I_q\!\left(2\sqrt{uv}\right).
 \label{eq:cir_loglik}
\end{equation}
In practice, $I_q(z)$ grows exponentially in $z$, which can
cause overflow for large arguments.
We use the numerically stable evaluation
$\ln I_q(z) = \ln \widetilde{I}_q(z) + z$, where
$\widetilde{I}_q(z) = e^{-z} I_q(z)$ is the
exponentially scaled Bessel function available in standard
scientific computing libraries.

The Feller condition $2\kappa\theta \geq \xi^2$ ensures that the
process never reaches zero; when it holds, the stationary
distribution is Gamma:
\begin{equation}
 X_\infty \;\sim\;
 \mathrm{Gamma}\!\left(
 \frac{2\kappa\theta}{\xi^2},\;
 \frac{\xi^2}{2\kappa}
 \right),
 \label{eq:cir_stationary}
\end{equation}
which is right-skewed and supported on $(0,\infty)$.

\paragraph{Log-OU.}
\label{subsubsec:logou_density}

The log-OU model is specified as $d\ln X_t = \kappa(\theta^{\log} - \ln X_t)dt + \xi\,d W_t$,
so $Y_t := \ln X_t$ is an OU process with long-run mean
$\theta^{\log}$ and diffusion coefficient $\xi$.
The transition density for $Y_{t+h} \mid Y_t$ is Gaussian
with mean and variance given by \eqref{eq:ou_moments} applied
to $Y_t$.
Using the change-of-variables formula for densities, the
transition density for $X_{t+h}$ in the original scale is
\begin{equation}
 p_{\log}(X_{t+h} \mid X_t) =
 \frac{1}{X_{t+h}}\,
 \phi\!\left(
 \frac{\ln X_{t+h} - \mu_t^{\log}}{s_{OU}}
 \right) \cdot \frac{1}{s_{OU}},
 \label{eq:logou_density}
\end{equation}
where $\mu_t^{\log} = \theta^{\log} + (\ln X_t - \theta^{\log})e^{-\kappa h}$
and $s_{OU}$ is as in \eqref{eq:ou_moments}.
The $1/X_{t+h}$ factor is the Jacobian of the log transformation.
The exact log-likelihood is therefore
\begin{equation}
 \ell^{\log}(\kappa, \theta^{\log}, \xi) =
 \sum_{i=0}^{n-1}
 \left[
 \ln\phi\!\left(\frac{\ln X_{t_{i+1}} - \mu_{t_i}^{\log}}{s_{OU}}\right)
 - \ln s_{OU} - \ln X_{t_{i+1}}
 \right].
 \label{eq:logou_loglik}
\end{equation}

The stationary distribution of $X_t$ under log-OU is
log-normal:
\begin{equation}
 \ln X_\infty \;\sim\;
 N\!\left(\theta^{\log},\;\frac{\xi^2}{2\kappa}\right),
 \label{eq:logou_stationary}
\end{equation}
which is positive, right-skewed, and has heavier right tails
than the Gaussian. This qualitative shape is much closer to what we observed in the empirical distributions of $\mathcal{L}_t$, $\mathcal{R}_t$,
and $\mathcal{C}_t$ in Section~\ref{subsubsec:distribution}.
Figure~\ref{fig:stat_dist} confirms this visually by overlaying the estimated stationary distribution from each model on the
empirical histogram of the level indicator $\mathcal{L}_t$.

\begin{figure}[H]
\centering
\includegraphics[width=\textwidth]{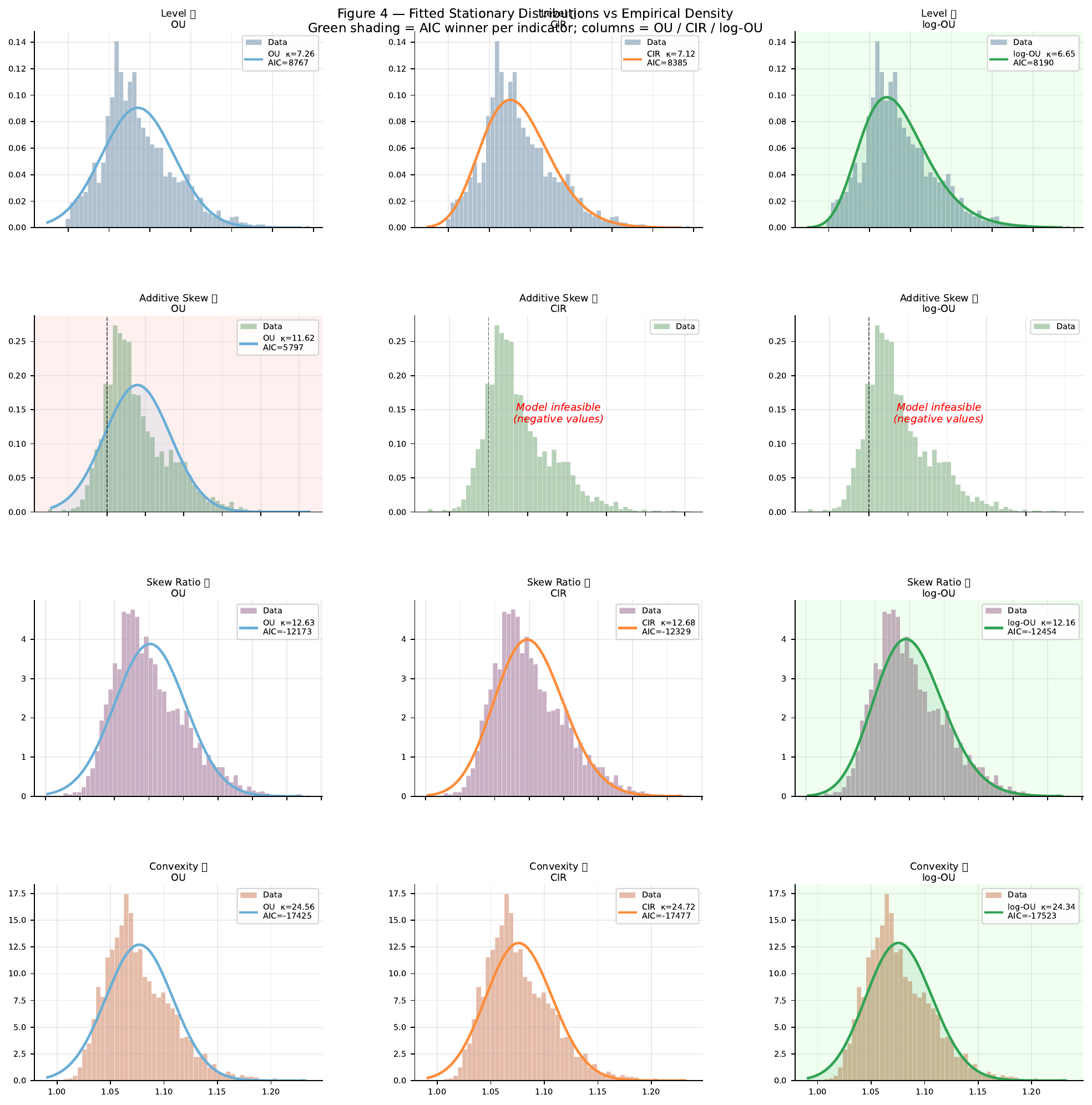}
\caption{Estimated stationary distributions for the three candidate
 models overlaid on the empirical density of the ATM vol level
 $\mathcal{L}_t$.
 \textit{Left}: OU (Gaussian, too symmetric and light-tailed).
 \textit{Center}: CIR (Gamma, right-skewed but lighter-tailed than the data).
 \textit{Right}: log-OU (log-normal, right-skewed, and the best fit by AIC).
 Parameters are the MLE estimates from Table~\ref{tab:mle_results}.
 Grey histogram: empirical density, $n = 2{,}998$.}
\label{fig:stat_dist}
\end{figure}

\begin{remark}[A note on non-negative volatility]
\label{rem:negativevol}
A common objection to the OU model for volatility is that
$\sigma_t$ can become negative. \citet{SchobelZhu1999} address this directly:
since $\sigma_t$ enters the futures dynamics as $\sigma_t F_t \dd W_t^{(1)}$, a negative value of $\sigma_t$ simply flips the sign of the Brownian driver and produces the same option prices as
$|\sigma_t|$. For the indicator $\mathcal{S}_t$, which is not itself a volatility but rather a signed difference of variance quantities, there is no such concern at all, as negative values are
economically meaningful, as Section~\ref{subsubsec:distribution}
established.
\end{remark}

\subsubsection{Maximum Likelihood Estimation}
\label{subsubsec:mle}

With exact transition densities, estimation is
straightforward: we maximise $\ell(\kappa, \theta, \xi)$ numerically
over the parameter vector for each indicator--model pair,
starting from multiple initial points to avoid local optima.
All three parameters are required to be positive; for the CIR
model, we additionally require the Feller condition $2\kappa\theta \geq \xi^2$. The time step is $\Delta t = 1/252$, matching the daily frequency of the data. Table~\ref{tab:mle_results} collects the MLE estimates, the exact log-likelihood values, and the AIC and BIC for every feasible indicator-model combination. A combination is marked ``n/a'' when the model is structurally infeasible: the log-OU and CIR models require strictly positive values, and since $\mathcal{S}_t$ takes negative values on 14.5\%
of sample days, neither can be applied to it.

\begin{table}[H]
\centering
\caption{MLE estimates and model fit statistics for each
 indicator--model pair. $n = 2{,}997$ transitions;
 $\Delta t = 1/252$. AIC $= -2\ell + 2k$ and
 BIC $= -2\ell + k\ln n$ with $k = 3$ parameters throughout.
 $\theta$ for log-OU is reported in the original scale
 (i.e., $e^{\hat\theta^{\log}}$).
 Bold entries indicate the best model for each indicator
 by AIC. Dashes (---) indicate structural infeasibility
 due to the sign constraint on $\mathcal{S}_t$.}
\label{tab:mle_results}
\renewcommand{\arraystretch}{1.25}
\setlength{\tabcolsep}{5pt}
\resizebox{\textwidth}{!}{%
\begin{tabular}{ll rr r rr r}
\toprule
Indicator & Model
 & $\hat\kappa$ & $\hat\theta$ & $\hat\xi$
 & $\ell$ & AIC & BIC \\
\midrule
\multirow{3}{*}{Level\; $\mathcal{L}_t$}
 & OU
 & 7.265 & 18.527 & 16.805
 & $-4380.4$ & $8766.7$ & $8784.7$ \\
 & CIR
 & 7.121 & 18.525 & 3.705
 & $-4189.5$ & $8385.0$ & $8403.0$ \\
 & \textbf{log-OU}
 & \textbf{6.651} & \textbf{18.027} & \textbf{0.842}
 & $\mathbf{-4091.9}$ & $\mathbf{8189.9}$ & $\mathbf{8207.9}$ \\
\midrule
\multirow{3}{*}{Additive skew\; $\mathcal{S}_t$}
 & \textbf{OU}
 & \textbf{11.624} & \textbf{1.979} & \textbf{10.328}
 & $\mathbf{-2895.7}$ & $\mathbf{5797.3}$ & $\mathbf{5815.3}$ \\
 & CIR & --- & --- & --- & --- & --- & --- \\
 & log-OU & --- & --- & --- & --- & --- & --- \\
\midrule
\multirow{3}{*}{Skew ratio\; $\mathcal{R}_t$}
 & OU
 & 12.635 & 1.103 & 0.516
 & $6089.7$ & $-12173.5$ & $-12155.5$ \\
 & CIR
 & 12.676 & 1.103 & 0.480
 & $6167.6$ & $-12329.1$ & $-12311.1$ \\
 & \textbf{log-OU}
 & \textbf{12.160} & \textbf{1.099} & \textbf{0.448}
 & $\mathbf{6229.8}$ & $\mathbf{-12453.7}$ & $\mathbf{-12435.7}$ \\
\midrule
\multirow{3}{*}{Convexity\; $\mathcal{C}_t$}
 & OU
 & 24.562 & 1.077 & 0.220
 & $8715.6$ & $-17425.1$ & $-17407.1$ \\
 & CIR
 & 24.724 & 1.077 & 0.210
 & $8741.7$ & $-17477.3$ & $-17459.3$ \\
 & \textbf{log-OU}
 & \textbf{24.337} & \textbf{1.076} & \textbf{0.201}
 & $\mathbf{8764.3}$ & $\mathbf{-17522.6}$ & $\mathbf{-17504.5}$ \\
\bottomrule
\end{tabular}}
\end{table}

\subsubsection{Model Comparison and the Heterogeneous Surface}
\label{subsubsec:model_comparison}

The pattern in Table~\ref{tab:mle_results} is unambiguous:
\begin{itemize}
 \item \textbf{Level $\mathcal{L}_t$}: log-OU wins by a wide margin.
 The AIC gap relative to OU is $8766.7 - 8189.9 = \mathbf{576.8}$
 points; relative to CIR it is $8385.0 - 8189.9 = \mathbf{195.1}$
 points. A gap of two AIC units is generally considered
 meaningful evidence; a gap of 195 is overwhelming.
 \item \textbf{Additive skew $\mathcal{S}_t$}: OU is the only
 feasible model. The data have negative values on 14.5\%
 of days, which rules out both CIR and log-OU before any likelihood
 is computed. OU wins not by being the best among equals but by
 being the only candidate that can fit the data at all.
 \item \textbf{Skew ratio $\mathcal{R}_t$}: log-OU wins over CIR
 by 124.6 AIC units and over OU by 280.2 units.
 \item \textbf{Convexity $\mathcal{C}_t$}: log-OU wins over CIR
 by 45.2 AIC units and over OU by 97.5 units.
\end{itemize}

The BIC, which penalizes model complexity more strongly for large
samples, yields identical winner rankings throughout,
since all three models have the same number of parameters
($k = 3$) and the likelihood differences dwarf any penalty
adjustment.
Figure~\ref{fig:aic_bic} visualizes these comparisons across
all four indicators.

\begin{figure}[H]
\centering
\includegraphics[width=\textwidth]{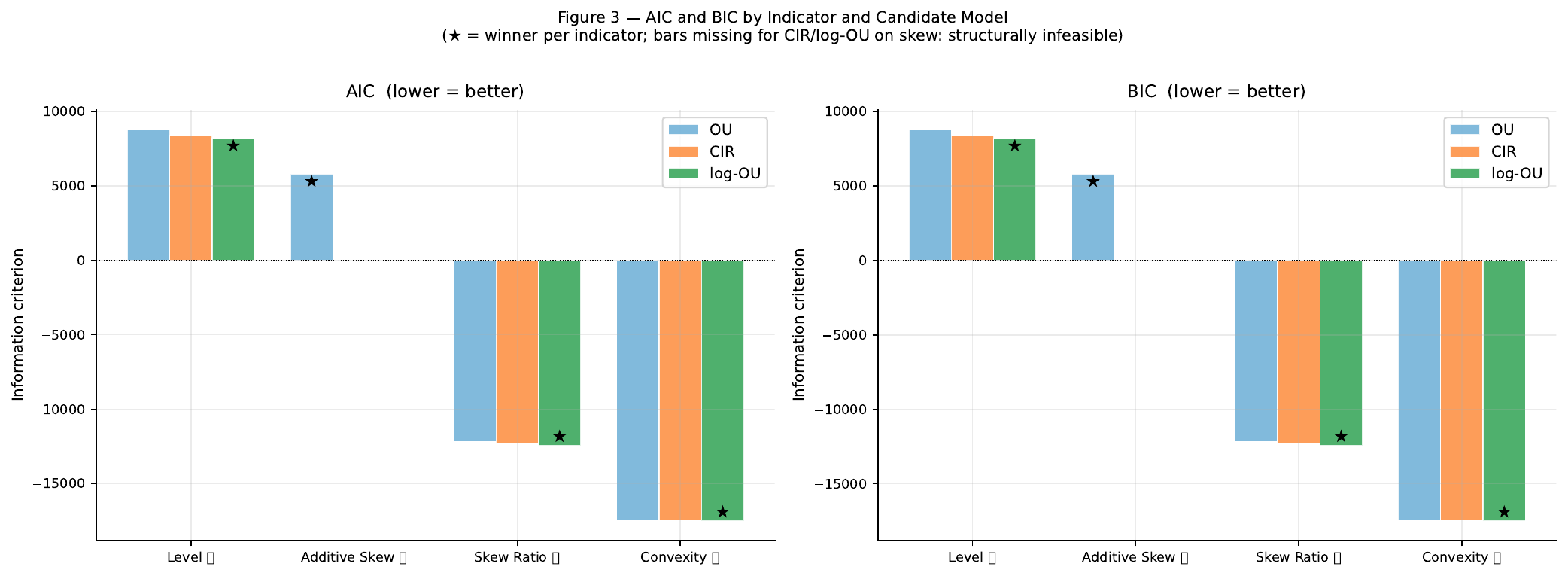}
\caption{AIC (left) and BIC (right) by indicator and candidate
 model. For the additive skew, only the OU bar is shown.
 CIR and log-OU are structurally infeasible.
 Note that for the skew ratio and convexity, the criterion
 values are negative (higher log-likelihoods than for
 the level and skew), so ``lower $=$ better'' means the
 log-OU bar is the most negative.}
\label{fig:aic_bic}
\end{figure}

\paragraph{Why Does log-OU Win for the Level?}

The magnitude of the AIC advantage for log-OU over OU
on the level indicator (576 units from 2997 observations)
is large enough to warrant a closer look at the mechanism. The stationary distribution of the OU process is Gaussian,
which is symmetric.
But $\mathcal{L}_t$ has a sample skewness of 0.81 and an excess
kurtosis of 0.92. It is right-skewed with a heavier right tail
than a Gaussian. A Gaussian stationary distribution cannot produce this shape without bias in the tail.
The log-OU process, by contrast, has a log-normal stationary
distribution, which is right-skewed by construction: for the estimated parameters, the log-normal places substantially
more probability mass in the 25--40\% vol region than the Gaussian
does.
Figure~\ref{fig:stat_dist} shows this graphically.
The log-normal curve follows the right shoulder of the empirical
histogram closely; the Gaussian underestimates the right tail
and overestimates the left shoulder.
The CIR's Gamma stationary distribution is also right-skewed,
which is why CIR beats OU, but it has lighter tails than the
log-normal, which is why log-OU still beats CIR. The same logic applies to the skew ratio and convexity, both of which share the right-skewed, always-positive character
of the level.

A few things are worth noting about the estimated parameters in
Table~\ref{tab:mle_results} before we proceed. The mean-reversion speeds $\hat\kappa$ are consistent across all
three models for a given indicator, which is reassuring: the
data contain a clear mean-reversion signal that all three
models pick up similarly.
For the level, $\hat\kappa$ is in the range 6.65--7.27 across
the three models; for the convexity, it is 24.3--24.7.
The ordering $\hat\kappa_C > \hat\kappa_S > \hat\kappa_L$ mirrors the half-life ordering from Table~\ref{tab:halflife} in
Section~\ref{sec:data_empirical}, as it should. The long-run level estimates $\hat\theta$ are also consistent
across models for each indicator and match the sample means
from Table~\ref{tab:summary_stats} closely: the level's
$\hat\theta \approx 18.0$--$18.5$ against a sample mean of
18.60, the skew ratio's $\hat\theta \approx 1.10$ against
a sample mean of 1.102.

The most striking contrast is in the volatility-of-volatility
estimates $\hat\xi$. For the level under OU, $\hat \xi = 16.81$, more than the entire range of the indicator. This is not a bug; it is a consequence of the additive noise structure. With additive Gaussian increments, the model needs a large
standard deviation to accommodate the wide range of the data.
Under log-OU, $\hat\xi = 0.842$, which is the volatility of
the log-vol process; that is, a more natural and interpretable
number (roughly an 84\% annual coefficient of variation
in log-volatility). This difference in scale is one reason the log-OU parameterization tends to be preferred in practice: the parameters are easier to reason about.

\subsubsection{The Heterogeneous Surface Factor: A Summary}
\label{subsubsec:heterogeneous_summary}

The findings of this section can be summarized in a single table and a single sentence.
Table~\ref{tab:model_winner} records the winning model for
each indicator along with the AIC margin of victory.
The sentence: \emph{no single diffusion class fits all features
of the soybean implied-volatility surface; the data
themselves --- not any modeling preference --- determine which
class is appropriate for each feature.}

\begin{table}[H]
\centering
\caption{Summary of model selection results.
 ``AIC margin'' is the difference between the second-best
 and best AIC; for the additive skew, there is only one
 feasible model.}
\label{tab:model_winner}
\renewcommand{\arraystretch}{1.3}
\begin{tabular}{llrrl}
\toprule
Indicator & Winner & $\hat\kappa$ & AIC margin & Reason \\
\midrule
Level\; $\mathcal{L}_t$
 & log-OU & 6.65 & 195.1 vs CIR & Right-skewed, always positive \\
Additive skew\; $\mathcal{S}_t$
 & OU & 11.62 & --- (only option) & Negative on 14.5\% of days \\
Skew ratio\; $\mathcal{R}_t$
 & log-OU & 12.16 & 124.6 vs CIR & Right-skewed, always positive \\
Convexity\; $\mathcal{C}_t$
 & log-OU & 24.34 & 45.2 vs CIR & Right-skewed, near 1 but always $>0$ \\
\bottomrule
\end{tabular}
\end{table}

This heterogeneity --- log-OU for level, convexity, and skew ratio; OU for the additive skew --- is the main empirical finding
of the paper. It is also the key input to the pricing model: because the level indicator $\mathcal{L}_t$ follows log-OU under the physical measure, the appropriate stochastic volatility specification for the futures option model is one in which $\ln\sigma_t$ follows an OU process, not one in which $\sigma_t$ or $\sigma_t^2$ does so.

\section{Heterogeneous Surface Factor Model and Surface Identification}
\label{sec:factor_identification}

The previous section shows that the implied-volatility surface is not well described by a single diffusion class. The level and convexity are positive, right-skewed, and best described on a log scale. The additive skew differs: it takes negative values on a meaningful part of the sample, so it requires a model on the full real line. This section builds a joint stochastic model that respects those empirical constraints. It then connects the observable surface indicators to the state variables and parameters of a log-OU stochastic-volatility model for soybean futures options.

We specify and estimate a multivariate surface factor model for the daily dynamics of the level, skew, and convexity. Then, mathematically, we prove that the three observable indicators identify the latent spot volatility, the leverage--vol-of-vol product, and the smile-implied vol-of-vol. This identification result bridges surface dynamics and option pricing.

\subsection{A Heterogeneous Surface Factor Model}
\label{subsec:heterogeneous_factor_model}

Let $
 Y_t = \bigl(Y_t^L,Y_t^S,Y_t^C\bigr)^{\top}
 := \bigl(\log \mathcal{L}_t,\mathcal{S}_t,\log \mathcal{C}_t\bigr)^{\top}.
$
We use logarithms for the level and convexity because these variables are positive and were selected as log-OU in Section~\ref{subsec:model_selection}. We keep the additive skew in its original scale because it can be negative and was selected as an OU process.

\begin{definition}[Heterogeneous surface factor model]
\label{def:heterogeneous_surface_model}
On a filtered probability space $(\Omega,\mathcal{F},(\mathcal{F}_t)_{t\geq 0},\mathbb{P})$, the three-dimensional surface factor model is
\begin{align}
 d\log \mathcal{L}_t
 &= \kappa_L\bigl(\theta_L-\log \mathcal{L}_t\bigr)dt
 + \xi_L\,dW_t^L,
 \label{eq:joint_level} \\
 d\mathcal{S}_t
 &= \kappa_S\bigl(\theta_S-\mathcal{S}_t\bigr)dt
 + \xi_S\,dW_t^S,
 \label{eq:joint_skew} \\
 d\log \mathcal{C}_t
 &= \kappa_C\bigl(\theta_C-\log \mathcal{C}_t\bigr)dt
 + \xi_C\,dW_t^C
 \label{eq:joint_convexity}
\end{align}
where $\kappa_j>0$ and $\xi_j>0$ for $j\in\{L,S,C\}$. The Brownian motion
$W_t=(W_t^L,W_t^S,W_t^C)^{\top}$ has instantaneous correlation matrix
\begin{equation}
 R_W=
 \begin{pmatrix}
 1 & \rho_{LS} & \rho_{LC} \\
 \rho_{LS} & 1 & \rho_{SC} \\
 \rho_{LC} & \rho_{SC} & 1
 \end{pmatrix},
 \qquad
 d\langle W^j,W^k\rangle_t=\rho_{jk}\,dt .
 \label{eq:brownian_corr}
\end{equation}
The full parameter vector is
\[
 \Theta=\bigl(\kappa_L,\theta_L,\xi_L,
 \kappa_S,\theta_S,\xi_S,
 \kappa_C,\theta_C,\xi_C,
 \rho_{LS},\rho_{LC},\rho_{SC}\bigr).
\]
\end{definition}

This specification is deliberately heterogeneous. It does not force the same support condition on all parts of the surface. The variables $\mathcal{L}_t$ and $\mathcal{C}_t$ remain positive by construction, while $\mathcal{S}_t$ is allowed to move on $\R$. The correlations in \eqref{eq:brownian_corr} capture contemporaneous co-movement among shocks to the surface. They do not impose cross-drift effects. After a shock, each component mean-reverts at its own speed.

\subsection{Exact Joint Transition Density}
\label{subsec:joint_transition_density}

The model in Definition~\ref{def:heterogeneous_surface_model} is Gaussian in the transformed state vector $Y_t$. For a fixed time step $h>0$, each component has the exact solution
\begin{equation}
 Y_{t+h}^j
 = \theta_j + \bigl(Y_t^j-\theta_j\bigr)e^{-\kappa_j h}
 + \xi_j \int_t^{t+h} e^{-\kappa_j(t+h-u)}\,dW_u^j,
 \qquad j\in\{L,S,C\}.
 \label{eq:joint_exact_solution}
\end{equation}
Therefore,
\begin{equation}
 Y_{t+h}\mid Y_t \sim
 \mathcal{N}\bigl(m_h(Y_t),\Sigma_h\bigr),
 \label{eq:joint_transition}
\end{equation}
where
\begin{equation}
 m_h^j(Y_t)=\theta_j+\bigl(Y_t^j-\theta_j\bigr)e^{-\kappa_j h},
 \label{eq:joint_mean}
\end{equation}
and the $(j,k)$ entry of the transition covariance matrix is
\begin{equation}
 [\Sigma_h]_{jk}
 = \rho_{jk}\xi_j\xi_k
 \frac{1-e^{-(\kappa_j+\kappa_k)h}}{\kappa_j+\kappa_k},
 \qquad \rho_{jj}=1.
 \label{eq:joint_covariance}
\end{equation}

\begin{proposition}[Joint transition covariance]
\label{prop:joint_covariance}
Under Definition~\ref{def:heterogeneous_surface_model}, the conditional covariance of $Y_{t+h}$ given $Y_t$ is given by \eqref{eq:joint_covariance}. In particular,
\[
 [\Sigma_h]_{jj}
 = \xi_j^2\frac{1-e^{-2\kappa_jh}}{2\kappa_j},
\]
which is the usual exact transition variance of a scalar OU process.
\end{proposition}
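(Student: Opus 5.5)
The plan is to work directly from the componentwise exact solution \eqref{eq:joint_exact_solution}. Conditional on $Y_t$, i.e.\ on $\mathcal{F}_t$, the term $\theta_j + (Y_t^j-\theta_j)e^{-\kappa_j h}$ is $\mathcal{F}_t$-measurable. All conditional randomness therefore sits in the stochastic integrals
\[
I^j := \xi_j\int_t^{t+h} e^{-\kappa_j(t+h-u)}\,dW_u^j .
\]
These integrals have deterministic integrands and involve only Brownian increments after $t$. By the independent-increments property they are independent of $\mathcal{F}_t$, jointly Gaussian, and of mean zero. So $[\Sigma_h]_{jk}$ reduces to the unconditional covariance $\mathbb{E}[I^j I^k]$.

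The key step is to evaluate this covariance with the polarized Itô isometry for correlated Brownian motions. Since $d\langle W^j,W^k\rangle_u=\rho_{jk}\,du$, we have
\[
\mathbb{E}\Bigl[\int_t^{t+h} f_j(u)\,dW_u^j\int_t^{t+h} f_k(u)\,dW_u^k\Bigr]=\rho_{jk}\int_t^{t+h} f_j(u)f_k(u)\,du .
\]
To justify this, I will write $W = A B$ with $B$ a standard three-dimensional Brownian motion and $AA^\top=R_W$, e.g.\ a Cholesky factor. The identity then follows from the ordinary multidimensional Itô isometry. Alternatively, apply Itô's product rule to the two martingales $M^j_s=\int_t^s f_j\,dW^j$ and take expectations; the martingale terms vanish because the integrands are deterministic and bounded.

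Next, substitute $f_j(u)=\xi_j e^{-\kappa_j(t+h-u)}$ and change variables $v=t+h-u$. This gives
\[
\rho_{jk}\xi_j\xi_k\int_0^h e^{-(\kappa_j+\kappa_k)v}\,dv=\rho_{jk}\xi_j\xi_k\,\frac{1-e^{-(\kappa_j+\kappa_k)h}}{\kappa_j+\kappa_k},
\]
which is well defined because $\kappa_j+\kappa_k>0$. This is \eqref{eq:joint_covariance}. Setting $j=k$, with $\rho_{jj}=1$, gives the diagonal formula $\xi_j^2(1-e^{-2\kappa_j h})/(2\kappa_j)$, which matches \eqref{eq:ou_moments}.

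Joint Gaussianity, which is needed for \eqref{eq:joint_transition} although the proposition itself only concerns the covariance, follows because $(I^L,I^S,I^C)$ is a linear image of Wiener integrals of deterministic functions against $B$. The only point needing care is the correlated-driver isometry. Everything else is a routine integral, so I do not expect a genuine obstacle.
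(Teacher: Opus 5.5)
Your proposal is correct and follows the paper's proof: isolate the Wiener integrals in the exact solution, apply the It\^o isometry for correlated Brownian motions with $d\langle W^j,W^k\rangle_u=\rho_{jk}\,du$, evaluate the exponential integral, and set $j=k$ to get the diagonal formula. Your extra justifications, namely independence of the integrals from $\mathcal{F}_t$ and the Cholesky reduction of the correlated isometry, only make explicit steps that the paper takes for granted.
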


\begin{proof}
From \eqref{eq:joint_exact_solution}, the only random part of $Y_{t+h}^j$ is
\[
 \xi_j\int_t^{t+h}e^{-\kappa_j(t+h-u)}\,dW_u^j.
\]
For two components $j$ and $k$, It\^o isometry for correlated Brownian motions gives
\begin{align*}
 \operatorname{Cov}(Y_{t+h}^j,Y_{t+h}^k\mid Y_t)
 &= \xi_j\xi_k
 \int_t^{t+h} e^{-\kappa_j(t+h-u)}e^{-\kappa_k(t+h-u)}
 d\langle W^j,W^k\rangle_u \\
 &= \rho_{jk}\xi_j\xi_k
 \int_t^{t+h} e^{-(\kappa_j+\kappa_k)(t+h-u)}\,du \\
 &= \rho_{jk}\xi_j\xi_k
 \frac{1-e^{-(\kappa_j+\kappa_k)h}}{\kappa_j+\kappa_k}.
\end{align*}
Setting $j=k$ and $\rho_{jj}=1$ gives the diagonal formula.
\end{proof}

Formula \eqref{eq:joint_covariance} is important because it shows how mean reversion dampens long-run dependence. The covariance between two surface factors is not controlled by $\rho_{jk}$ alone. It is also scaled by $\kappa_j+\kappa_k$. When one factor mean-reverts quickly, its long-run co-movement with the other factor is reduced.

\subsection{Likelihood and Two-Step Estimation}
\label{subsec:joint_likelihood}

Let $Y_{t_0},\ldots,Y_{t_n}$ be the daily transformed observations, with $h=1/252$. Define the one-step innovation
\begin{equation}
 \varepsilon_i(\Theta)=Y_{t_{i+1}}-m_h(Y_{t_i}),
 \qquad i=0,\ldots,n-1.
 \label{eq:innovation_vector}
\end{equation}
The Gaussian log-likelihood in the transformed coordinates is
\begin{equation}
 \ell_Y(\Theta)
 =-\frac{n}{2}\log |\Sigma_h|
 -\frac{1}{2}\sum_{i=0}^{n-1}
 \varepsilon_i(\Theta)^{\top}\Sigma_h^{-1}\varepsilon_i(\Theta)
 +\text{constant}.
 \label{eq:joint_loglik_y}
\end{equation}
If the likelihood is written for the original variables $(\mathcal{L}_t,\mathcal{S}_t,\mathcal{C}_t)$ rather than for $Y_t$, the Jacobian contribution is
\begin{equation}
 -\sum_{i=0}^{n-1}\bigl(\log \mathcal{L}_{t_{i+1}}+\log \mathcal{C}_{t_{i+1}}\bigr),
 \label{eq:jacobian_joint}
\end{equation}
which comes from the transformations $\log\mathcal{L}$ and $\log\mathcal{C}$.

In practice, we estimate the model in two steps. First, the marginal parameters $(\kappa_j,\theta_j)$ are estimated from the exact one-dimensional likelihoods in Section~\ref{subsec:model_selection}. Second, keeping these estimates fixed, we estimate $(\xi_L,\xi_S,\xi_C,\rho_{LS},\rho_{LC},\rho_{SC})$ from the joint likelihood \eqref{eq:joint_loglik_y}. This second step is a standard multivariate Gaussian covariance estimation problem with a structured covariance matrix.

\begin{proposition}[Identification of the joint covariance parameters]
\label{prop:joint_identification}
Suppose $h>0$ is fixed, $\kappa_j>0$ is known for each $j$, and $R_W$ is positive definite. Then the map
\[
 (\xi_L,\xi_S,\xi_C,\rho_{LS},\rho_{LC},\rho_{SC})
 \longmapsto \Sigma_h
\]
defined by \eqref{eq:joint_covariance} is one-to-one on the parameter space $\xi_j>0$ and $|\rho_{jk}|<1$ subject to positive definiteness.
\end{proposition}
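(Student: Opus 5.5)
The plan is to invert the map explicitly: first recover each $\xi_j$ from the diagonal of $\Sigma_h$, then each $\rho_{jk}$ from the off-diagonal entries. Throughout, $h>0$ and the $\kappa_j>0$ are fixed and known. Introduce the deterministic scalars
\[
 a_{jk}:=\frac{1-e^{-(\kappa_j+\kappa_k)h}}{\kappa_j+\kappa_k},
\]
which are strictly positive because $\kappa_j+\kappa_k>0$ and $h>0$. In this notation, \eqref{eq:joint_covariance} reads $[\Sigma_h]_{jk}=\rho_{jk}\xi_j\xi_k a_{jk}$. Equivalently, $\Sigma_h = D\,(R_W\circ A)\,D$, where $D=\operatorname{diag}(\xi_L,\xi_S,\xi_C)$, $A=(a_{jk})$, and $\circ$ denotes the Hadamard product.

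\textbf{Step 1 (diagonal).} Proposition~\ref{prop:joint_covariance} gives $[\Sigma_h]_{jj}=\xi_j^2 a_{jj}$ with $a_{jj}>0$. Since $\xi_j>0$ is required, this yields the unique value $\xi_j=\sqrt{[\Sigma_h]_{jj}/a_{jj}}$. So two parameter vectors producing the same $\Sigma_h$ must have identical $\xi$'s.

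\textbf{Step 2 (off-diagonal).} With the $\xi_j$ now determined and nonzero, and $a_{jk}>0$, each off-diagonal entry gives
\[
 \rho_{jk}=\frac{[\Sigma_h]_{jk}}{\xi_j\xi_k a_{jk}}
 =\frac{[\Sigma_h]_{jk}}{a_{jk}}\sqrt{\frac{a_{jj}a_{kk}}{[\Sigma_h]_{jj}[\Sigma_h]_{kk}}}.
\]
Hence the correlations are also uniquely determined, and injectivity follows: equal images force equal $\xi$'s and then equal $\rho$'s.

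\textbf{Step 3 (well-posedness of the target set).} For the statement to be meaningful, the image should consist of genuine covariance matrices. I would verify that $\Sigma_h$ is positive definite whenever $R_W$ is. The cleanest route is to go back to the integral representation in the proof of Proposition~\ref{prop:joint_covariance}. For $v\neq0$,
\[
 v^\top\Sigma_h v=\int_0^h u(s)^\top R_W\, u(s)\,ds,
 \qquad u_j(s)=v_j\xi_j e^{-\kappa_j s}.
\]
The vector $u(s)$ is nonzero for every $s$, so the integrand is strictly positive and $v^\top\Sigma_h v>0$. (Alternatively, note that $A$ is a Gram matrix of the functions $e^{-\kappa_j s}$ and apply the Schur product theorem.)

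This step is not needed for injectivity itself, which is purely algebraic. I expect it to be the only point requiring any care, together with the observation that the constraints $|\rho_{jk}|<1$ and $R_W\succ0$ restrict only the domain and play no role in the inversion.
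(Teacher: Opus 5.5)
Your proposal is correct and follows the same route as the paper's proof: you recover each $\xi_j$ from the diagonal entry using the strictly positive factor $(1-e^{-2\kappa_j h})/(2\kappa_j)$, then obtain each $\rho_{jk}$ by dividing the off-diagonal entry by $\xi_j\xi_k(1-e^{-(\kappa_j+\kappa_k)h})/(\kappa_j+\kappa_k)$. Your Step~3, which shows $\Sigma_h$ is positive definite via $v^\top\Sigma_h v=\int_0^h u(s)^\top R_W u(s)\,ds$, is a correct addition that the paper does not include, and as you note it is not needed for injectivity.
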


\begin{proof}
The diagonal entries of $\Sigma_h$ identify $\xi_j$ uniquely because
\[
 [\Sigma_h]_{jj}
 = \xi_j^2\frac{1-e^{-2\kappa_jh}}{2\kappa_j},
\]
and the factor $(1-e^{-2\kappa_jh})/(2\kappa_j)$ is strictly positive. Hence
\[
 \xi_j
 =\left(
 [\Sigma_h]_{jj}\frac{2\kappa_j}{1-e^{-2\kappa_jh}}
 \right)^{1/2}.
\]
Once $\xi_j$ and $\xi_k$ are known, the off-diagonal entry identifies $\rho_{jk}$ through
\[
 \rho_{jk}
 = [\Sigma_h]_{jk}
 \frac{\kappa_j+\kappa_k}{\xi_j\xi_k\{1-e^{-(\kappa_j+\kappa_k)h}\}}.
\]
Thus no two distinct parameter vectors can produce the same transition covariance matrix.
\end{proof}

\subsection{Stationary Covariance and Correlation}
\label{subsec:stationary_covariance}

Letting $h\to\infty$ in the covariance calculation gives the stationary covariance matrix of the transformed surface factors:
\begin{equation}
 [\Sigma_\infty]_{jk}
 = \rho_{jk}\frac{\xi_j\xi_k}{\kappa_j+\kappa_k},
 \qquad
 [\Sigma_\infty]_{jj}=\frac{\xi_j^2}{2\kappa_j}.
 \label{eq:stationary_covariance}
\end{equation}
Therefore, the stationary correlation between two factors is
\begin{equation}
 \operatorname{Corr}_{\infty}(Y^j,Y^k)
 = 2\rho_{jk}\frac{\sqrt{\kappa_j\kappa_k}}{\kappa_j+\kappa_k}.
 \label{eq:stationary_correlation}
\end{equation}

\begin{proposition}[Mean-reversion dampening of long-run correlation]
\label{prop:correlation_dampening}
For any pair $j\neq k$, the stationary correlation satisfies
\begin{equation}
 \left|\operatorname{Corr}_{\infty}(Y^j,Y^k)\right|
 \leq |\rho_{jk}|,
 \label{eq:corr_bound}
\end{equation}
with equality if and only if $\kappa_j=\kappa_k$.
\end{proposition}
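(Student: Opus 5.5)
The plan is to read the bound directly off the closed form \eqref{eq:stationary_correlation}, which already expresses the stationary correlation as $\rho_{jk}$ times a factor depending only on the two mean-reversion speeds. Taking absolute values and using $\kappa_j,\kappa_k>0$ gives
\[
 \bigl|\operatorname{Corr}_\infty(Y^j,Y^k)\bigr|
 = |\rho_{jk}|\cdot\frac{2\sqrt{\kappa_j\kappa_k}}{\kappa_j+\kappa_k}.
\]
So everything reduces to showing that the dampening factor $D(\kappa_j,\kappa_k):=2\sqrt{\kappa_j\kappa_k}/(\kappa_j+\kappa_k)$ lies in $(0,1]$, with $D=1$ exactly when $\kappa_j=\kappa_k$.

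For the bound, I will use the AM--GM inequality, or more concretely the identity
\[
 \kappa_j+\kappa_k-2\sqrt{\kappa_j\kappa_k}=\bigl(\sqrt{\kappa_j}-\sqrt{\kappa_k}\bigr)^2\ge 0 .
\]
This gives $2\sqrt{\kappa_j\kappa_k}\le\kappa_j+\kappa_k$. Dividing by the positive quantity $\kappa_j+\kappa_k$ yields $D\le 1$, and multiplying by $|\rho_{jk}|$ yields \eqref{eq:corr_bound}.

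For the equality case, the square vanishes if and only if $\sqrt{\kappa_j}=\sqrt{\kappa_k}$, that is, $\kappa_j=\kappa_k$. Hence $D=1$ iff $\kappa_j=\kappa_k$, and in that case equality holds in \eqref{eq:corr_bound}. The converse direction is the only real subtlety, and it is where I expect the main obstacle. If $\rho_{jk}=0$, both sides of \eqref{eq:corr_bound} are zero for any speeds, so the "only if" fails literally. I will therefore state the equality characterization under the nondegeneracy assumption $\rho_{jk}\neq 0$. Under that assumption, $|\rho_{jk}|D=|\rho_{jk}|$ forces $D=1$ and hence $\kappa_j=\kappa_k$. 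I would add a one-line remark noting that the trivial case $\rho_{jk}=0$ gives equality regardless of the speeds.

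Finally, I would justify \eqref{eq:stationary_correlation} itself for completeness. It follows from Proposition~\ref{prop:joint_covariance} by letting $h\to\infty$, which gives \eqref{eq:stationary_covariance}. Dividing $[\Sigma_\infty]_{jk}$ by $\sqrt{[\Sigma_\infty]_{jj}[\Sigma_\infty]_{kk}}=\xi_j\xi_k/(2\sqrt{\kappa_j\kappa_k})$ then produces the factor $D$. The $\xi$'s cancel, so the dampening depends only on the speeds.
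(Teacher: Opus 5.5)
Your proposal is correct and follows essentially the same route as the paper's proof: both read off the dampening factor $D_{jk}=2\sqrt{\kappa_j\kappa_k}/(\kappa_j+\kappa_k)$ from the stationary correlation formula, bound it by AM--GM, and get $D_{jk}=1$ if and only if $\kappa_j=\kappa_k$. Your caveat is a genuine minor correction that the paper's statement and proof pass over: the \emph{only if} direction of the equality claim needs $\rho_{jk}\neq 0$, because for $\rho_{jk}=0$ both sides vanish whatever the speeds.
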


\begin{proof}
From \eqref{eq:stationary_correlation}, the ratio between the stationary correlation and the Brownian correlation is
\[
 D_{jk}=2\frac{\sqrt{\kappa_j\kappa_k}}{\kappa_j+\kappa_k}.
\]
By the arithmetic--geometric mean inequality,
\[
 2\sqrt{\kappa_j\kappa_k}\leq \kappa_j+\kappa_k,
\]
so $0<D_{jk}\leq 1$. Equality holds exactly when $\kappa_j=\kappa_k$. This proves the result.
\end{proof}

This result has a simple interpretation. Two factors may be strongly correlated at the shock level, but if one factor mean-reverts much faster than the other, their long-run level correlation is weaker. This is relevant for the soybean surface because convexity mean-reverts faster than the level. As a result, the long-run level--convexity correlation is smaller in magnitude than the instantaneous Brownian correlation.

\subsection{Surface Identification of the Stochastic-Volatility Model}
\label{subsec:surface_identification}

We now connect the surface factors to the risk-neutral stochastic-volatility model used for
pricing. Under the risk-neutral measure \(\mathbb Q\), let the soybean futures price satisfy
\begin{align}
\frac{dF_t}{F_t}
&=
\sigma_t\,dB_t^{(1)},
\label{eq:rn_futures}\\
dX_t
&=
\kappa(\theta-X_t)dt+\xi\,dB_t^{(2)},
\qquad X_t=\log\sigma_t,
\label{eq:rn_logvol}\\
d\langle B^{(1)},B^{(2)}\rangle_t
&=
\rho\,dt.
\label{eq:rn_corr}
\end{align}
Here \(F_t\) is the futures price, \(\sigma_t\) is the instantaneous volatility, \(X_t\) is
log-volatility, \(\kappa>0\) is the mean-reversion speed, \(\theta\) is the long-run mean of
log-volatility, \(\xi>0\) is the volatility of log-volatility, and \(\rho\in[-1,1]\) is the
futures--volatility correlation.

For a fixed option maturity \(\tau>0\), the observed CVOL surface is summarized by three
quantities:
\[
\mathcal L_t=\text{ATM implied-volatility level},\qquad
\mathcal S_t=\text{additive skew},\qquad
\mathcal C_t=\text{convexity ratio}.
\]
The next theorem gives a leading-order map from these observable quantities to the main
risk-neutral volatility states. Step~1 of the proof (the level relation) is an exact
consequence of the OU transition law of $X_t$; Steps~2 and~3 (the skew and convexity
relations) are leading-order results in the small-vol-of-vol regime. A self-contained
derivation of Steps~2--3 via a cumulant expansion of the log-return, together with pointers
to the asymptotic-expansion literature that makes the argument fully rigorous, is given in
Appendix~\ref{app:identification-proof}.

\begin{theorem}[Leading-order surface identification system]
\label{thm:surface_identification}
Assume that the soybean futures price follows the log-OU stochastic-volatility model
\eqref{eq:rn_futures}--\eqref{eq:rn_corr}. Define
\begin{equation}
A(\kappa,\tau)
=
\frac{1-e^{-\kappa\tau}}{\kappa\tau},
\label{eq:A_factor}
\end{equation}
\begin{equation}
G(\kappa,\tau)
=
\frac{1}{2(\kappa \tau)^2}
\left\{
\tau
-
\frac{2}{\kappa}\left(1-e^{-\kappa \tau}\right)
+
\frac{1}{2\kappa}\left(1-e^{-2\kappa \tau}\right)
\right\},
\label{eq:G_factor}
\end{equation}
\begin{equation}
J(\kappa,\xi,\tau)
=
\xi^2G(\kappa,\tau),
\label{eq:J_factor}
\end{equation}
and
\begin{equation}
f(\kappa,\tau)
=
\frac{2}{\tau}
\left\{
\frac{\tau}{\kappa}
-
\frac{1-e^{-\kappa\tau}}{\kappa^2}
\right\}.
\label{eq:f_factor}
\end{equation}
Then, under the usual small-maturity and small-smile approximation, the CVOL level, skew,
and convexity satisfy
\begin{equation}
\log\mathcal L_t
\approx
\theta
+
A(\kappa,\tau)(\log\sigma_t-\theta)
+
J(\kappa,\xi,\tau),
\label{eq:level_identification}
\end{equation}
\begin{equation}
\mathcal S_t
\approx
\rho\xi\sigma_t f(\kappa,\tau),
\label{eq:skew_identification}
\end{equation}
and
\begin{equation}
\mathcal C_t-1
\approx
\xi^2G(\kappa,\tau).
\label{eq:convexity_identification}
\end{equation}
Consequently, if \(\kappa\) and \(\theta\) are known or consistently estimated, the surface gives
the leading-order recovery formulas
\begin{equation}
\sigma_t
\approx
\exp\left\{
\theta+
\frac{\log\mathcal L_t-\theta-J(\kappa,\xi,\tau)}
{A(\kappa,\tau)}
\right\},
\label{eq:sigma_identified}
\end{equation}
\begin{equation}
\rho\xi
\approx
\frac{\mathcal S_t}{\sigma_t f(\kappa,\tau)},
\label{eq:rho_xi_identified}
\end{equation}
and
\begin{equation}
\xi^2
\approx
\frac{\mathcal C_t-1}{G(\kappa,\tau)}.
\label{eq:xi_identified}
\end{equation}
\end{theorem}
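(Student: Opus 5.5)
The plan is to treat the three relations separately, in the order the statement lists them. The level relation comes straight from the OU transition law. The skew and convexity relations come from a second-order cumulant expansion of the log-return in the small-$\xi$ regime. The recovery formulas \eqref{eq:sigma_identified}--\eqref{eq:xi_identified} then follow by algebraic inversion, since $A>0$, $f>0$ and $G>0$ for $\kappa,\tau>0$.

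\emph{Level.} By \eqref{eq:joint_exact_solution} applied to $X_t=\log\sigma_t$, we have $X_{t+s}=\theta+(X_t-\theta)e^{-\kappa s}+\xi\int_t^{t+s}e^{-\kappa(t+s-u)}dB^{(2)}_u$. I identify $\log\mathcal L_t$ with the time-average of log-volatility over $[t,t+\tau]$, plus a Jensen-type correction. Averaging the conditional mean over $s\in[0,\tau]$ gives $\theta+A(\kappa,\tau)(X_t-\theta)$ exactly, where $A$ is as in \eqref{eq:A_factor}.

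For the correction, I pass from the average of $\log\sigma$ to the log of the root-mean-square volatility. To second order this adds one half of the variance of the centred log-volatility path, measured in the appropriate integrated sense. Computing $\int_0^\tau\!\!\int_0^\tau \operatorname{Cov}(X_{t+s},X_{t+r})\,ds\,dr/\tau^2$ from Proposition~\ref{prop:joint_covariance}, i.e.\ the variance of $\frac1\tau\int_0^\tau X_{t+s}ds$, gives
\[
\frac{\xi^2}{\kappa^2\tau^2}\Bigl\{\tau-\tfrac{2}{\kappa}(1-e^{-\kappa\tau})+\tfrac1{2\kappa}(1-e^{-2\kappa\tau})\Bigr\}.
\]
Half of this is exactly $\xi^2G(\kappa,\tau)=J$. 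So \eqref{eq:level_identification} holds with the convention that ``$\approx$'' absorbs higher cumulants.

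\emph{Skew.} I write $\log(F_{t+\tau}/F_t)=\int\sigma\,dB^{(1)}-\frac12\int\sigma^2ds$ and linearise $\sigma_{t+s}\approx\sigma_t(1+X_{t+s}-X_t)$. The leading third-order cumulant comes from the covariance between $\int\sigma\,dB^{(1)}$ and $\int\sigma^2$. It equals $\rho\xi\sigma_t^3\cdot 2\int_0^\tau\!\int_0^s e^{-\kappa(s-u)}du\,ds$, and this double integral is $\frac{\tau}{\kappa}-\frac{1-e^{-\kappa\tau}}{\kappa^2}$.

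Normalising by the total variance $\sigma_t^2\tau$ gives a normalised skewness proportional to $\rho\xi\sigma_t f(\kappa,\tau)$. The standard Edgeworth/short-maturity relation then converts return skewness into the slope of implied volatility. The UpVar$-$DnVar difference is a linear functional of that slope, in units of volatility. Together these give \eqref{eq:skew_identification}, up to a normalisation constant that I absorb into the definition of $f$.

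\emph{Convexity.} The ratio $\mathrm{Var}_{30}/\mathcal L$ exceeds one by the excess kurtosis contribution, i.e.\ by the dispersion of integrated variance relative to its mean. At second order in $\xi$ this is the same integrated covariance as above, so $\mathcal C_t-1\approx\xi^2G$. The $\rho^2$ cross terms are dropped as higher order in the small-smile approximation.

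The main obstacle is making the skew and convexity normalisations match the CVOL definitions exactly: the constant in front of $f$, and the claim that curvature contributes exactly $\xi^2G$ with no $\rho^2\xi^2$ term. I would first try the explicit second-order expansion in $\xi$ of the mixing (Hull--White/Romano--Touzi) representation of the price. I would then read off the implied-volatility Taylor coefficients in log-moneyness, and match them to the variance-swap decomposition of \citet{CarrWu2009}. I would defer full rigour to the cited asymptotic-expansion literature, as the statement's ``$\approx$'' permits.
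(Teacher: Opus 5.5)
Your architecture matches the paper's: exact OU averaging plus a Jensen term for the level, a small-$\xi$ third-cumulant mechanism for the skew, a second-order dispersion argument for the convexity, and inversion using $A,f,G>0$. Your double integral of $\operatorname{Cov}_t(X_{t+s},X_{t+r})=\frac{\xi^2}{2\kappa}\bigl(e^{-\kappa|s-r|}-e^{-\kappa(s+r)}\bigr)$ correctly gives $\operatorname{Var}_t(\bar X_{t,\tau})=2\xi^2G(\kappa,\tau)$. This is equivalent to the paper's stochastic-Fubini/It\^o-isometry computation; Proposition~\ref{prop:joint_covariance} is the wrong reference, since it only gives same-time cross-covariances.

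The gap is the justification of the factor $\tfrac12\operatorname{Var}_t(\bar X_{t,\tau})$: no root-mean-square volatility produces it. Write $Q_\tau:=\int_t^{t+\tau}\sigma_u^2\,du$. First, $\tfrac12\log\mathbb E_t[e^{2\bar X_{t,\tau}}]$ adds $\operatorname{Var}_t(\bar X_{t,\tau})=2J$. Second, $\tfrac12\log\mathbb E_t[Q_\tau/\tau]$ adds $\tau^{-1}\int_0^\tau\operatorname{Var}_t(X_{t+s})\,ds\approx\xi^2\tau/2\approx3J$ for small $\kappa\tau$. Moreover, this quantity is the model's variance-swap volatility, so identifying $\mathcal L_t$ with it forces $\mathcal C_t\equiv1$ and contradicts your own convexity step. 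Third, $\log\mathbb E_t[(Q_\tau/\tau)^{1/2}]$ (the Hull--White ATM approximation) adds $J$ plus the within-path dispersion $\mathbb E_t\bigl[\tau^{-1}\int_0^\tau(X_{t+s}-\bar X_{t,\tau})^2\,ds\bigr]$. At $X_t=\theta$ and small $\kappa\tau$ this extra term is $\approx\xi^2\tau/6$, the same size as $J$ itself. The statement's $J$ comes out only from the paper's identification $\mathcal L_t\approx\mathbb E_t[\exp(\bar X_{t,\tau})]$, i.e.\ geometric-average volatility in place of realized RMS volatility. Combined with the exact Gaussian identity $\log\mathbb E_t[e^{\bar X_{t,\tau}}]=\mathbb E_t[\bar X_{t,\tau}]+\tfrac12\operatorname{Var}_t(\bar X_{t,\tau})$, this gives $J$. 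You must adopt that identification explicitly rather than appeal to an RMS step ``in the appropriate integrated sense''.

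For the skew, set $Z_\tau:=\int_t^{t+\tau}\sigma_u\,dB^{(1)}_u$. Your leading-order $\operatorname{Cov}_t(Z_\tau,Q_\tau)=\rho\xi\sigma_t^3\tau f(\kappa,\tau)$ is right and is the paper's mechanism: It\^o's formula gives $\mathbb E_t[Z_\tau^3]=3\,\mathbb E_t\int_t^{t+\tau}Z_u\sigma_u^2\,du=3\operatorname{Cov}_t(Z_\tau,Q_\tau)$, and the $-\tfrac12Q_\tau$ part of the log-return contributes nothing at order $\xi$. Dividing by $\sigma_t^2\tau$ already gives exactly $\rho\xi\sigma_t f$, so there is nothing to absorb. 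You could not absorb a constant into $f$ anyway, because $f$ is fixed by the statement. The remaining step, that $\mathcal S_t$ equals this quantity with unit coefficient, is not derived in the paper either; it is asserted through the Gram--Charlier link. You should present it as the identifying normalization of the CVOL skew.

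For the convexity, your route genuinely differs from the paper's and is sharper. The paper uses the fourth cumulant plus the Gram--Charlier curvature coefficient, which delivers only proportionality to $\xi^2G$. At $\rho=0$, your route uses $\mathrm{Var}_{30}=(\mathbb E_t[Q_\tau/\tau])^{1/2}$ and $\mathcal L_t\approx\mathbb E_t[(Q_\tau/\tau)^{1/2}]$. This gives $\log\mathcal C_t\approx\tfrac12\operatorname{Var}_t\bigl(\log(Q_\tau/\tau)^{1/2}\bigr)\approx\tfrac12\operatorname{Var}_t(\bar X_{t,\tau})=\xi^2G$, which is the paper's $\mathcal C_t\approx e^{\xi^2G}$ with the correct constant. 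Make the factor $\tfrac18$ explicit, since the relative dispersion itself is $\operatorname{Var}_t(Q_\tau)/(\mathbb E_tQ_\tau)^2\approx8\xi^2G$.

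Two caveats apply. First, the $\rho^2\xi^2$ terms are of the same order in $\xi$ as $\xi^2G$, not higher order. They are being neglected, since the mixing argument is exact only at $\rho=0$; the paper likewise drops them. Second, using $\mathcal L_t\approx\mathbb E_t[(Q_\tau/\tau)^{1/2}]$ consistently in the level step would reinstate the within-path term above. So the level and convexity relations rest on two different heuristics, as they do in the paper.
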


\begin{proof}
The proof separates the three pieces of the surface. The ATM level is linked to the maturity
average of expected future log-volatility. The skew is linked to the first-order interaction between
futures shocks and volatility shocks. The convexity is linked to the second-order uncertainty of
future volatility.

\medskip
\noindent\textbf{Step 1: The ATM level and the maturity-averaged log-volatility.}

Let \(X_t=\log\sigma_t\). From \eqref{eq:rn_logvol}, \(X_t\) follows an Ornstein--Uhlenbeck
process. Its exact solution over the interval \([t,t+u]\) is
\begin{equation}
X_{t+u}
=
\theta+(X_t-\theta)e^{-\kappa u}
+
\xi\int_t^{t+u}e^{-\kappa(t+u-s)}\,dB_s^{(2)}.
\label{eq:ou_solution_identification}
\end{equation}
Taking conditional expectation at time \(t\) gives
\begin{equation}
\mathbb E_t[X_{t+u}]
=
\theta+(X_t-\theta)e^{-\kappa u}.
\label{eq:ou_conditional_mean_identification}
\end{equation}
An implied volatility with maturity \(\tau\) is not an instantaneous object. It reflects the
market price of volatility over the life of the option. For this reason, the relevant log-volatility
quantity is the maturity average
\begin{equation}
\bar X_{t,\tau}
=
\frac{1}{\tau}\int_0^\tau X_{t+u}\,du.
\label{eq:average_logvol_definition}
\end{equation}
Using \eqref{eq:ou_conditional_mean_identification}, we obtain
\begin{align}
\mathbb E_t[\bar X_{t,\tau}]
&=
\frac{1}{\tau}\int_0^\tau
\left\{
\theta+(X_t-\theta)e^{-\kappa u}
\right\}du \nonumber\\
&=
\theta
+
(X_t-\theta)\frac{1}{\tau}\int_0^\tau e^{-\kappa u}\,du \nonumber\\
&=
\theta
+
A(\kappa,\tau)(X_t-\theta),
\label{eq:average_logvol_mean_identification}
\end{align}
where
\[
A(\kappa,\tau)=\frac{1-e^{-\kappa\tau}}{\kappa\tau}.
\]
This part is exact and follows directly from the OU transition.

We now relate this maturity-averaged log-volatility to the ATM implied-volatility level. The
ATM implied volatility is obtained by inverting an option price, so it is not exactly equal to
\(\exp(\mathbb E_t[\bar X_{t,\tau}])\). However, for short maturities and moderate smiles, the
ATM implied volatility is well approximated by the volatility generated by the average future
log-volatility. Since \(\bar X_{t,\tau}\) is Gaussian conditional on \(\mathcal F_t\), the exponential
mapping introduces a Jensen correction:
\begin{equation}
\log\mathcal L_t
\approx
\mathbb E_t[\bar X_{t,\tau}]
+
\frac12\operatorname{Var}_t(\bar X_{t,\tau}).
\label{eq:atm_jensen_approximation}
\end{equation}

It remains to compute \(\operatorname{Var}_t(\bar X_{t,\tau})\). From
\eqref{eq:ou_solution_identification},
\[
\bar X_{t,\tau}
=
\mathbb E_t[\bar X_{t,\tau}]
+
\frac{\xi}{\tau}
\int_0^\tau
\int_t^{t+u}
e^{-\kappa(t+u-s)}\,dB_s^{(2)}du.
\]
Changing the order of integration gives
\[
\bar X_{t,\tau}
=
\mathbb E_t[\bar X_{t,\tau}]
+
\frac{\xi}{\tau}
\int_t^{t+\tau}
\left[
\int_s^{t+\tau}e^{-\kappa(v-s)}\,dv
\right]dB_s^{(2)}.
\]
The inner integral is
\[
\int_s^{t+\tau}e^{-\kappa(v-s)}\,dv
=
\frac{1-e^{-\kappa(t+\tau-s)}}{\kappa}.
\]
By It\^o isometry,
\begin{align}
\operatorname{Var}_t(\bar X_{t,\tau})
&=
\frac{\xi^2}{\tau^2\kappa^2}
\int_t^{t+\tau}
\left(1-e^{-\kappa(t+\tau-s)}\right)^2ds \nonumber\\
&=
\frac{\xi^2}{\tau^2\kappa^2}
\int_0^\tau
\left(1-e^{-\kappa z}\right)^2dz.
\label{eq:average_logvol_variance_step}
\end{align}
Expanding the square,
\[
\int_0^\tau
\left(1-e^{-\kappa z}\right)^2dz
=
\int_0^\tau
\left(1-2e^{-\kappa z}+e^{-2\kappa z}\right)dz,
\]
and therefore
\[
\int_0^\tau
\left(1-e^{-\kappa z}\right)^2dz
=
\tau
-
\frac{2}{\kappa}\left(1-e^{-\kappa\tau}\right)
+
\frac{1}{2\kappa}\left(1-e^{-2\kappa\tau}\right).
\]
Hence
\begin{equation}
\frac12\operatorname{Var}_t(\bar X_{t,\tau})
=
\frac{\xi^2}{2(\kappa\tau)^2}
\left\{
\tau
-
\frac{2}{\kappa}\left(1-e^{-\kappa\tau}\right)
+
\frac{1}{2\kappa}\left(1-e^{-2\kappa\tau}\right)
\right\}.
\label{eq:jensen_term_identification}
\end{equation}
The right-hand side is \(J(\kappa,\xi,\tau)=\xi^2G(\kappa,\tau)\). Combining
\eqref{eq:average_logvol_mean_identification}, \eqref{eq:atm_jensen_approximation}, and
\eqref{eq:jensen_term_identification} yields
\[
\log\mathcal L_t
\approx
\theta
+
A(\kappa,\tau)(\log\sigma_t-\theta)
+
J(\kappa,\xi,\tau).
\]
This proves \eqref{eq:level_identification}. Since \(A(\kappa,\tau)>0\) for \(\kappa>0\) and
\(\tau>0\), the equation can be inverted to obtain \eqref{eq:sigma_identified}.

\medskip
\noindent\textbf{Step 2: The skew and the leverage--vol-of-vol product (leading order).}

The additive skew measures the leading asymmetry of the implied-volatility surface around the
ATM point. In a stochastic-volatility model, this asymmetry is generated by the covariation between
the futures return shock and the volatility shock: if \(\rho=0\), futures shocks and volatility
shocks are locally uncorrelated and the first-order tilt of the smile vanishes; if \(\rho\neq0\), an
innovation in the futures price shifts the conditional distribution of future volatility, producing
an asymmetric risk-neutral distribution and hence a nonzero implied-volatility skew.

Appendix~\ref{app:identification-proof} derives this relation from the SDE by a first-order
expansion of the log-return $Y_\tau=\log(F_{t+\tau}/F_t)$ in the small parameter $\xi$: writing
$X_{t+u}=\theta+\xi M_u$ with $M_u=\int_0^u e^{-\kappa(u-s)}dB_s^{(2)}$ an $O(1)$ process, the third
cumulant of $Y_\tau$ is shown there to vanish at order $\xi^0$ (because the leading term is Gaussian)
and to be $O(\rho\xi)$ at first order, with a kernel of the form
\[
\frac{2}{\tau}\int_0^\tau\int_0^u e^{-\kappa(u-s)}\,ds\,du.
\]
The inner integral is
\[
\int_0^u e^{-\kappa(u-s)}\,ds
=
\frac{1-e^{-\kappa u}}{\kappa},
\]
so that
\begin{align}
\frac{2}{\tau}
\int_0^\tau
\int_0^u e^{-\kappa(u-s)}\,ds\,du
&=
\frac{2}{\tau}
\int_0^\tau
\frac{1-e^{-\kappa u}}{\kappa}\,du \nonumber\\
&=
\frac{2}{\tau}
\left\{
\frac{\tau}{\kappa}
-
\frac{1-e^{-\kappa\tau}}{\kappa^2}
\right\}
=f(\kappa,\tau).
\end{align}
Since the leading-order asymmetry is proportional to the local volatility
level \(\sigma_t\), the volatility-of-volatility \(\xi\), and the leverage correlation \(\rho\), and
since the standard Gram--Charlier link between the third cumulant of the log-return and the
linear-in-moneyness coefficient of the implied-volatility smile \citep{BackusForesiLiWu1997}
converts this cumulant into a skew of the observed additive form, the leading skew relation is
\[
\mathcal S_t
\approx
\rho\xi\sigma_t f(\kappa,\tau).
\]
This proves \eqref{eq:skew_identification} to leading order in \(\xi\). Inverting the relation gives
\[
\rho\xi
\approx
\frac{\mathcal S_t}{\sigma_t f(\kappa,\tau)}.
\]
This proves \eqref{eq:rho_xi_identified}.

\medskip
\noindent\textbf{Step 3: The convexity and volatility-of-volatility (leading order).}

The convexity ratio measures the curvature of the implied-volatility smile. In a stochastic-volatility
model, smile curvature is mainly a second-order effect, generated by uncertainty about future
volatility. Under the log-OU specification, this uncertainty is summarized by the conditional
variance of the average future log-volatility, already computed in Step~1:
\[
\frac12\operatorname{Var}_t(\bar X_{t,\tau})
=
\xi^2G(\kappa,\tau).
\]
Appendix~\ref{app:identification-proof} shows, again via the cumulant expansion, that the
fourth cumulant of \(Y_\tau\) is \(O(\xi^2)\) at leading order and, through the analogous
Gram--Charlier link between the fourth cumulant and the quadratic-in-moneyness (convexity)
coefficient of the smile, that the model-implied convexity ratio expands around a flat smile as
\[
\mathcal C_t
\approx
\exp\left\{\xi^2G(\kappa,\tau)\right\}.
\]
Since \(\xi^2G(\kappa,\tau)\) is small at the 30-day horizon, the first-order expansion
\(e^z-1\approx z\) gives
\[
\mathcal C_t-1
\approx
\xi^2G(\kappa,\tau).
\]
This proves \eqref{eq:convexity_identification}. Since \(G(\kappa,\tau)>0\) for
\(\kappa>0\) and \(\tau>0\), the equation can be inverted:
\[
\xi^2
\approx
\frac{\mathcal C_t-1}{G(\kappa,\tau)}.
\]
This proves \eqref{eq:xi_identified} and completes the proof.
\end{proof}

\begin{remark}[Why the relations are approximations]
\label{rem:why_approximation}
The relations in Theorem~\ref{thm:surface_identification} are not written as exact equalities
because the CVOL indicators are functions of the implied-volatility surface, not direct observations
of \(\sigma_t\), \(\xi\), and \(\rho\). The log-OU conditional moments used in Step~1 of the proof are
exact: the conditional mean of \(X_{t+u}\) and the variance of the average
\(\bar X_{t,\tau}\) follow exactly from the OU dynamics. Steps~2 and~3 are genuine leading-order
asymptotic results in the small-vol-of-vol regime, established via the cumulant expansion in
Appendix~\ref{app:identification-proof}, rather than exact identities. Higher-order smile terms,
finite-maturity effects, jump-risk compensation, volatility risk premia, and microstructure noise
are absorbed in the approximation error. For this reason, \(\xi_t^{\mathrm{surf}}\) and
\(\rho_t^{\mathrm{surf}}\) should be interpreted as effective surface-implied risk-neutral quantities
rather than exact physical diffusion parameters.
\end{remark}

\begin{remark}[Exact model-implied identification]
\label{rem:exact_model_implied_identification}
An exact version of the identification problem can be defined through the full stochastic-volatility
pricing map. Let
\[
I_t^{SV}(k,\tau;\sigma_t,\xi,\rho)
\]
denote the Black--76 implied volatility obtained from the log-OU stochastic-volatility price, where
\[
k=\log(K/F_t)
\]
is log-moneyness. The model-implied ATM level, skew, and convexity can then be defined directly
from the implied-volatility curve:
\[
\mathcal L_t^{SV}
=
I_t^{SV}(0,\tau;\sigma_t,\xi,\rho),
\]
\[
\mathcal S_t^{SV}
=
\left.
\frac{\partial I_t^{SV}(k,\tau;\sigma_t,\xi,\rho)}
{\partial k}
\right|_{k=0},
\]
and
\[
\mathcal C_t^{SV}
=
\left.
\frac{\partial^2 I_t^{SV}(k,\tau;\sigma_t,\xi,\rho)}
{\partial k^2}
\right|_{k=0},
\]
or, equivalently, by using the exact finite-difference definitions that match the CVOL construction.
Exact identification would then require solving the nonlinear system
\[
\mathcal L_t^{SV}(\sigma_t,\xi,\rho)=\mathcal L_t,
\qquad
\mathcal S_t^{SV}(\sigma_t,\xi,\rho)=\mathcal S_t,
\qquad
\mathcal C_t^{SV}(\sigma_t,\xi,\rho)=\mathcal C_t.
\]
This formulation is exact within the chosen stochastic-volatility model. However, it generally has
no closed-form inverse because \(I_t^{SV}\) itself is obtained from the Feynman--Kac expectation
or from the pricing PDE. The closed-form formulas in Theorem~\ref{thm:surface_identification}
should therefore be understood as an analytical leading-order approximation to this exact
nonlinear identification problem.
\end{remark}

\begin{remark}[Convexity close to the flat-smile boundary]
\label{rem:convexity_floor}
The theoretical leading-order relation \(\mathcal C_t-1\approx\xi^2G(\kappa,\tau)\) is
non-negative in the pure-diffusion model, because \(G(\kappa,\tau)>0\) and \(\xi^2\geq0\). In the
data, however, \(\mathcal C_t\) can be slightly below one: its sample minimum is 0.997
(Table~\ref{tab:summary_stats}). This occurs near the flat-smile boundary and can also be caused
by rounding in the reported CVOL fields. For empirical recovery of the surface-implied vol-of-vol
we therefore use the positive part \((\mathcal C_t-1)_+ := \max\{\mathcal C_t-1,0\}\) in
\eqref{eq:xi_surf_state} below. This correction does not change the theory; it only prevents a
nearly flat or slightly rounded convexity observation from producing an imaginary
volatility-of-volatility estimate.
\end{remark}

\subsection{Surface-Implied Vol-of-Vol and Leverage}
\label{subsec:surface_implied_states}

The identification system has an important implication. It not only provides a test of the constant-parameter log-OU model. It also shows how the implied-volatility surface can be used to construct time-varying pricing states. In the benchmark model, the volatility-of-volatility parameter \(\xi\) and the leverage coefficient \(\rho\) are constant. However, the CVOL convexity and skew indicators change over time. Once the identification equations are applied day by day, the recovered vol-of-vol and leverage also become time-varying. From Theorem~\ref{thm:surface_identification}, the convexity equation gives
\begin{equation}
\mathcal C_t-1
\approx
\xi^2G(\kappa,\tau).
\end{equation}
Therefore, for each date \(t\), the surface-implied effective vol-of-vol is
\begin{equation}
\xi_t^{\mathrm{surf}}
=
\left[
\frac{(\mathcal C_t-1)_+}{G(\kappa,\tau)}
\right]^{1/2},
\qquad
(\mathcal C_t-1)_+ := \max\{\mathcal C_t-1,\,0\},
\label{eq:xi_surf_state}
\end{equation}
using the positive-part correction of Remark~\ref{rem:convexity_floor}. Since \(\mathcal C_t\) is stochastic, equation~\eqref{eq:xi_surf_state} defines a stochastic vol-of-vol state. This is not an additional latent factor introduced by assumption. It is recovered directly from the observed smile curvature.

Similarly, the skew equation gives
\begin{equation}
\mathcal S_t
\approx
\rho\xi\sigma_t f(\kappa,\tau).
\end{equation}
Using the surface-implied value \(\xi_t^{\mathrm{surf}}\), we obtain the surface-implied leverage coefficient
\begin{equation}
\rho_t^{\mathrm{surf}}
=
\frac{\mathcal S_t}
{\widetilde\xi_t^{\mathrm{surf}}\sigma_t f(\kappa,\tau)},
\qquad
\widetilde\xi_t^{\mathrm{surf}} := \max\{\xi_t^{\mathrm{surf}},\,\xi_{\min}\},
\label{eq:rho_surf_state}
\end{equation}
where \(\xi_{\min}>0\) is a small numerical floor used only to avoid division by zero on days when
the smile is essentially flat (equivalently, when \((\mathcal C_t-1)_+=0\)). Thus, the skew indicator
identifies a time-varying effective correlation between futures shocks and volatility shocks. The
level, convexity, and skew indicators therefore generate the three main pricing states:
\begin{equation}
\mathcal L_t \longrightarrow \sigma_t,
\qquad
\mathcal C_t \longrightarrow \xi_t^{\mathrm{surf}},
\qquad
\mathcal S_t \longrightarrow \rho_t^{\mathrm{surf}}.
\label{eq:surface_state_map}
\end{equation}

This is the main bridge between the empirical surface analysis and the pricing model. The constant-\(\xi\), constant-\(\rho\) log-OU specification is best understood as a benchmark. It is useful because it gives a simple model with a clear pricing PDE and a transparent identification structure. But the data suggest that soybean option smiles contain more information than a constant-parameter model can absorb. The convexity factor changes over time, so the smile-implied vol-of-vol changes over time. The skew also changes over time, so the leverage channel changes as well.

It is important to interpret \(\xi_t^{\mathrm{surf}}\) and \(\rho_t^{\mathrm{surf}}\) carefully. They are not necessarily the physical parameters of the volatility process. They are surface-implied effective quantities under the pricing measure. They may contain true stochastic volatility-of-volatility, volatility risk premia, jump-risk compensation, and temporary demand for tail protection. For this reason, we do not use the raw values mechanically in the pricing section. Instead, we treat them as observable surface-driven state variables and regularize them before using them as risk-neutral pricing inputs.

This interpretation also clarifies the empirical gap between the time-series estimate and the surface-implied vol-of-vol. The log-OU fit to the ATM level gives
\begin{equation}
\widehat{\xi}_{\mathrm{TS}}=0.842,
\end{equation}
while the average convexity-implied value, computed from \eqref{eq:xi_surf_state} with $\kappa=6.651$
(the log-OU estimate for the level) and $\tau=30/252$, is
\begin{equation}
\overline{\xi}_{\mathrm{surf}}=2.543.
\end{equation}
This difference should not be read as a pure failure of the physical-measure model. The time-series estimate and the surface-implied quantity live in different informational and, potentially, measure-theoretic environments. Under the identifying restriction that one constant vol-of-vol parameter should describe both, the gap is large; economically, it may reflect time-varying vol-of-vol, jump risk, a volatility risk premium, or residual approximation error. The key implication for the pricing exercise is that smile curvature supplies information not contained in the ATM level alone. Appendix~\ref{app:spec-test} formalizes the equality restriction as a HAC-based diagnostic (Newey--West $z\approx 45$ at a bandwidth of 20 trading days, robust to bandwidths from 5 to 100 days).

The skew equation gives a second diagnostic. If one uses the smaller time-series value \(\widehat{\xi}_{\mathrm{TS}}\) in the skew identification equation, the implied correlation can exceed one in magnitude, which is not admissible. By contrast, using \(\xi_t^{\mathrm{surf}}\) gives leverage estimates that are economically more reasonable. This supports the view that convexity and skew should be read together. Convexity identifies the magnitude of volatility uncertainty, while skew identifies how that uncertainty is linked to futures returns.

The pricing model developed in the next section follows this logic. At each pricing date, we recover \(\sigma_t\), \(\xi_t^{\mathrm{surf}}\), and \(\rho_t^{\mathrm{surf}}\) from the CVOL surface. The raw surface-implied quantities are then transformed into regularized risk-neutral inputs,
\begin{equation}
\xi_t^Q
=
\mathcal R_{\xi}\left(\xi_t^{\mathrm{surf}}\right),
\qquad
\rho_t^Q
=
\mathcal R_{\rho}\left(\rho_t^{\mathrm{surf}}\right),
\label{eq:regularized_pricing_states}
\end{equation}
where \(\mathcal R_{\xi}\) and \(\mathcal R_{\rho}\) are simple stability maps. For example, \(\xi_t^Q\) may be capped above to avoid using jump-risk or risk-premium compensation as a pure diffusion coefficient, and \(\rho_t^Q\) is projected into an admissible interval inside \((-1,1)\). The resulting model is a surface-driven stochastic-volatility model with stochastic vol-of-vol and stochastic leverage:
\begin{align}
\frac{dF_t}{F_t}
&=
\sigma_t\,dW_t^F,
\label{eq:surface_driven_futures}\\
d\log\sigma_t
&=
\kappa(\theta-\log\sigma_t)dt
+
\xi_t^Q\,dW_t^\sigma,
\label{eq:surface_driven_logvol}\\
d\langle W^F,W^\sigma\rangle_t
&=
\rho_t^Q\,dt.
\label{eq:surface_driven_corr}
\end{align}
In this form, the implied-volatility surface is no longer only an output of the model. It becomes an input that updates the volatility level, the vol-of-vol channel, and the leverage channel.

\section{Surface-Driven Option Pricing and Numerical Validation}
\label{sec:pricing_numerics_validation}
\label{sec:pricing_numerics}

The previous section shows how the CME CVOL soybean surface can be converted into three pricing inputs: a volatility state, an effective vol-of-vol state, and an effective leverage state. We now use these quantities to price soybean futures options. The numerical exercise has a simple structure. Monte Carlo simulation is the benchmark, because it prices the risk-neutral model directly from the Feynman--Kac representation. The finite-difference method is used as an independent validation of the pricing PDE. A neural network is then trained as a fast pricer. It is not a residual-correction model; it learns the pricing map produced by the stochastic-volatility model.

This section is organized as follows. Subsection~\ref{subsec:pricing_state_recovery} describes the surface states used for pricing. Subsection~\ref{subsec:pricing_model} gives the risk-neutral model. Subsection~\ref{subsec:numerical_methods} explains the Monte Carlo, finite-difference, and neural-network procedures. Subsection~\ref{subsec:event_impact} studies the COVID and La~Ni\~na event windows. Subsection~\ref{subsec:numerical_validation} reports the numerical accuracy, the full-sample relative-error table, and the runtime comparison.

\subsection{Surface states used in the pricing experiment}
\label{subsec:pricing_state_recovery}

At each pricing date, the observed surface indicators are transformed into the state vector
\[
(\mathcal L_t,\mathcal S_t,\mathcal C_t)
\quad \longmapsto \quad
(\sigma_t,\xi_t^Q,\rho_t^Q).
\]
The volatility state \(\sigma_t\) is recovered from the ATM level. The quantity \(\xi_t^Q\) is derived from convexity and is interpreted as an effective risk-neutral vol-of-vol. The coefficient \(\rho_t^Q\) is obtained from skew and measures the effective leverage between futures returns and volatility shocks. These quantities are surface-implied. They may contain risk premia, jump-risk compensation, and short-lived demand for tail protection. For this reason, the raw surface-implied values are regularized before they are used as diffusion inputs.

Table~\ref{tab:surface_states_regime} summarizes the recovered states for the full sample and for the two event windows. The La~Ni\~na window is the most stressed agricultural period in the sample. Its mean ATM volatility is \(25.29\%\), compared with \(18.60\%\) in the full sample and \(17.64\%\) during COVID. The recovered volatility state is also much higher during La~Ni\~na. The mean additive skew rises from \(1.96\) volatility points in the full sample to \(4.83\) volatility points during La~Ni\~na. This is important because it shows that the event is not only a high-volatility period. It is also a strong smile period.

\begin{table}[H]
\centering
\caption{Surface-implied pricing states by regime. The skew column reports the additive skew in volatility points.}
\label{tab:surface_states_regime}
\renewcommand{\arraystretch}{1.15}
\begin{tabular}{lrrrrrrrr}
\toprule
Regime & Days & ATM (\%) & \(\sigma_t\) & \(\xi_t^{\mathrm{surf}}\) & \(\xi_t^Q\) & \(\rho_t^Q\) & \(\mathcal C_t\) & Skew \\
\midrule
Full sample & 2998 & 18.60 & 0.188 & 2.54 & 1.20 & 0.394 & 1.077 & 1.96 \\
COVID & 213 & 17.64 & 0.173 & 2.65 & 1.19 & 0.658 & 1.086 & 3.16 \\
La Ni\~na & 85 & 25.29 & 0.292 & 2.78 & 1.20 & 0.642 & 1.090 & 4.83 \\
\bottomrule
\end{tabular}
\end{table}

\FloatBarrier

\subsection{Risk-neutral surface-driven pricing model}
\label{subsec:pricing_model}

Let \(F_t\) denote the soybean futures price and let \(X_t=\log\sigma_t\) denote log-volatility. Under the risk-neutral measure \(\mathbb Q\), the pricing model is
\begin{align}
\frac{dF_t}{F_t}
&=
\sigma_t\,dW_t^F,
\label{eq:rn_surface_futures}\\
dX_t
&=
\kappa(\theta-X_t)dt+\xi_t^Q\,dW_t^\sigma,
\qquad X_t=\log\sigma_t,
\label{eq:rn_surface_logvol}\\
d\langle W^F,W^\sigma\rangle_t
&=
\rho_t^Qdt.
\label{eq:rn_surface_correlation}
\end{align}
The mean-reversion parameters \(\kappa\) and \(\theta\) come from the log-OU fit to the ATM volatility level. The quantities \(\xi_t^Q\) and \(\rho_t^Q\) come from the current implied-volatility surface. They are held fixed over the life of each option. This frozen-surface convention keeps the pricing problem two-dimensional, with state variables \((F_t,X_t)\). A fully dynamic version with stochastic \(\xi_t^Q\) and stochastic \(\rho_t^Q\) would require additional state variables and is left for future work.

The raw surface-implied vol-of-vol is computed from convexity as
\begin{equation}
\xi_t^{\mathrm{surf}}
=
\left[
\frac{(\mathcal C_t-1)^+}{G(\kappa,\tau)}
\right]^{1/2},
\qquad
(\mathcal C_t-1)^+=\max\{\mathcal C_t-1,0\}.
\label{eq:xi_surf_section4}
\end{equation}
The positive part is used because the observed convexity can be very close to one. A value slightly below one is interpreted as a near-flat smile rather than a negative vol-of-vol. The leverage state is recovered from the skew equation,
\begin{equation}
\rho_t^{\mathrm{surf}}
=
\frac{\mathcal S_t}
{\widetilde\xi_t^{\mathrm{surf}}\sigma_t f(\kappa,\tau)},
\qquad
\widetilde\xi_t^{\mathrm{surf}}=\max\{\xi_t^{\mathrm{surf}},\xi_{\min}\}.
\label{eq:rho_surf_section4}
\end{equation}
The pricing inputs are then regularized by
\begin{align}
\xi_t^Q
&=\min\{\xi_t^{\mathrm{surf}},\xi^Q_{\max}\},
\label{eq:xi_regularized_section4}\\
\rho_t^Q
&=\Pi_{[-\rho_{\max},\rho_{\max}]}
\left(\rho_t^{\mathrm{surf}}\right),
\label{eq:rho_regularized_section4}
\end{align}
where \(\Pi\) denotes projection. This step prevents very large smile-implied values, which may reflect jumps or risk premia, from being used mechanically as pure diffusion coefficients.

\subsection{Pricing equation and numerical methods}
\label{subsec:numerical_methods}

Let \(V(t,F,x)\) be the value of a European option with payoff \(\Phi(F_T)\), where \(x=\log\sigma\). Conditional on a pricing date, write \(\xi=\xi_t^Q\) and \(\rho=\rho_t^Q\). The frozen-surface model gives the pricing PDE
\begin{equation}
0
=
V_t
+
\frac12 e^{2x}F^2V_{FF}
+
\rho\xi e^x F V_{Fx}
+
\frac12\xi^2V_{xx}
+
\kappa(\theta-x)V_x
-rV,
\label{eq:surface_pricing_pde_section4}
\end{equation}
with terminal condition \(V(T,F,x)=\Phi(F)\). The mixed derivative term is caused by the correlation between futures shocks and volatility shocks. Its coefficient is \(\rho\xi e^x\), not \(\rho\xi e^{2x}\), because the state variable in the second dimension is \(x=\log\sigma\).

The same price has the Feynman--Kac representation
\begin{equation}
V(t,F,x)
=
e^{-r(T-t)}
\mathbb E^{\mathbb Q}
\left[\Phi(F_T)\mid F_t=F,\;X_t=x\right].
\label{eq:feynman_kac_section4}
\end{equation}
This representation is the basis for Monte Carlo simulation. If \(\Delta\) is the time step, the log-volatility state is simulated exactly:
\begin{equation}
X_{n+1}
=
\theta+(X_n-\theta)e^{-\kappa\Delta}
+
\xi_t^Q
\sqrt{\frac{1-e^{-2\kappa\Delta}}{2\kappa}}
Z_{2,n},
\label{eq:mc_exact_logvol_section4}
\end{equation}
where \(Z_{2,n}\sim N(0,1)\). The futures price is updated by
\begin{equation}
\log F_{n+1}
=
\log F_n
-
\frac12e^{2X_n}\Delta
+
e^{X_n}\sqrt{\Delta}\,Z_{1,n},
\label{eq:mc_futures_update_section4}
\end{equation}
with \(\mathrm{Corr}(Z_{1,n},Z_{2,n})=\rho_t^Q\). The Monte Carlo benchmark is
\begin{equation}
V^{MC}
=
e^{-rT}\frac1M\sum_{m=1}^M\Phi(F_T^{(m)}).
\label{eq:mc_estimator_section4}
\end{equation}
Antithetic variates are used in the simulation. Black--76 is used only to convert model prices into implied-volatility units. It is not used as the pricing benchmark.

The finite-difference method solves the PDE independently of Monte Carlo. We use the grid variables
\[
y=\log(F/K),
\qquad
x=\log\sigma.
\]
If \(V(t,F,x)=U(t,y,x)\), then
\[
FV_F=U_y,
\qquad
F^2V_{FF}=U_{yy}-U_y,
\qquad
FV_{Fx}=U_{yx}.
\]
The PDE in time-to-maturity form is therefore
\begin{equation}
U_\tau
=
\frac12e^{2x}(U_{yy}-U_y)
+
\rho\xi e^xU_{yx}
+
\frac12\xi^2U_{xx}
+
\kappa(\theta-x)U_x
-rU.
\label{eq:log_grid_pde_section4}
\end{equation}
We use a Crank--Nicolson scheme with Rannacher smoothing. The first smoothing steps are helpful because the option payoff has a kink at the strike. The \(y\)-domain is chosen adaptively for each option based on the current moneyness, maturity, and volatility. This avoids using a fixed truncation interval for all strikes and maturities.

The neural network is trained after numerical labels have been generated. Its input vector is
\[
z=
\left(
\log(F_0/K),\;T,\;X_0,\;\xi_t^Q,\;\rho_t^Q,\;\kappa,\;\theta,\;r
\right),
\]
and the output is the normalized price \(V/K\). The training labels come from Monte Carlo prices. The design is regime-balanced, so that the training set contains normal, COVID, and La~Ni\~na states. This is important because a fast pricer trained only on calm periods may extrapolate poorly in stressed markets.

\subsection{Event-period pricing impact}
\label{subsec:event_impact}

We compare the full sample with two event windows. The COVID window covers March--December 2020 and represents a broad market and supply-chain disruption. The La~Ni\~na window spans May--August 2021 and represents a weather-risk episode more closely linked to soybean fundamentals. Table~\ref{tab:event_price_iv_impact} reports the Monte Carlo price and implied-volatility impact of each event period relative to the full-sample benchmark. In this experiment, COVID has a lower mean option price and lower mean implied volatility. The mean Monte Carlo price is \(12.45\%\) below the full-sample level, and the mean implied volatility is lower by about \(131\) basis points. La~Ni\~na has the opposite effect. The mean Monte Carlo price is \(37.66\%\) above the full-sample level, and the mean implied volatility is higher by about \(298\) basis points.

\begin{table}[H]
\centering
\caption{Monte Carlo price and implied-volatility impact by event period. The full sample is the reference period.}
\label{tab:event_price_iv_impact}
\renewcommand{\arraystretch}{1.15}
\begin{tabular}{lrrrrr}
\toprule
Regime & Contracts & Mean MC price & Price impact (\%) & Mean MC IV (\%) & IV impact (bps) \\
\midrule
COVID & 45 & 58.57 & -12.45 & 18.82 & -130.76 \\
La Ni\~na & 45 & 92.08 & 37.66 & 23.11 & 298.30 \\
\bottomrule
\end{tabular}
\end{table}
\begin{flushleft}
\footnotesize
\textit{Note:} The column ``Contracts'' does not refer to calendar days.
For each event period, prices are computed on five representative pricing
dates, three moneyness levels, and three maturities. This gives
\(5\times 3\times 3=45\) option contracts per event window.
\end{flushleft}
Figure~\ref{fig:mc_iv_smiles_regime} gives the implied-volatility smiles by regime. The La~Ni\~na smile is above the full-sample and COVID smiles at the 30-, 90-, and 180-day maturities. The gap is strongest at 30 days, but it remains visible at longer maturities. This confirms that the event effect is not only a shift in the ATM level. The surface shape also changes, especially on the call side of the smile.

\begin{figure}[H]
\centering
\includegraphics[width=\textwidth]{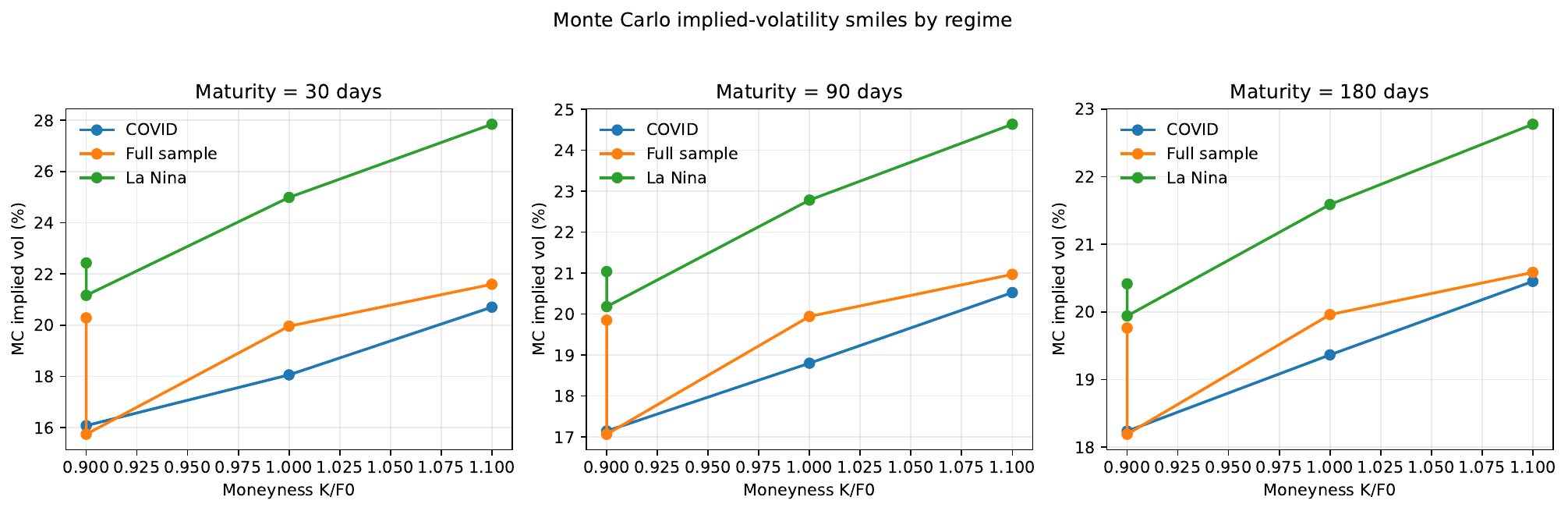}
\caption{Monte Carlo implied-volatility smiles by regime. La~Ni\~na produces the highest smile across the three maturities.}
\label{fig:mc_iv_smiles_regime}
\end{figure}

\FloatBarrier

\subsection{Full-sample numerical validation}
\label{subsec:numerical_validation}

Figure~\ref{fig:full_sample_price_grid} compares Monte Carlo, finite differences, and the neural network on the same full-sample price grid. The three curves almost overlap at 30, 90, and 180 days. This is the first check that the numerical methods are internally consistent: Monte Carlo gives the benchmark, finite differences solve the PDE, and the neural network reproduces the same pricing map after training.

\begin{figure}[H]
\centering
\includegraphics[width=\textwidth]{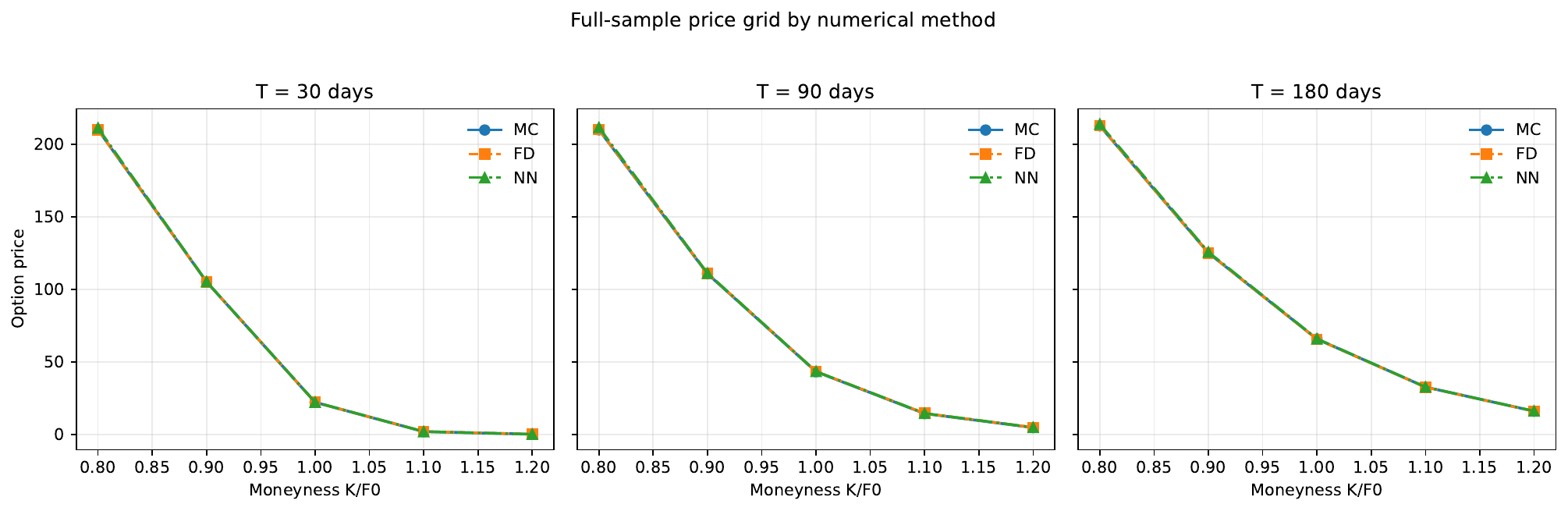}
\caption{Full-sample price grid by numerical method. Monte Carlo, finite differences, and the neural network are plotted on the same moneyness grid.}
\label{fig:full_sample_price_grid}
\end{figure}

The visual comparison in Figure~\ref{fig:full_sample_price_grid} is useful, but price curves can hide small differences when the option is deep in or out of the money. Table~\ref{tab:full_sample_price_relative_errors} therefore reports the relative price errors directly. For most core strikes, the errors are small. At 30 days and \(K/F_0=1.20\), the relative error is larger because the Monte Carlo benchmark price is only \(0.146\). In that case, even a very small absolute price difference becomes a large percentage error. This is why relative price errors must be read together with the underlying Monte Carlo price.

\begin{table}[H]
\centering
\caption{Full-sample price relative errors. Monte Carlo is the benchmark. Relative errors are computed as \(100(V^{\mathrm{method}}-V^{MC})/V^{MC}\).}
\label{tab:full_sample_price_relative_errors}
\scriptsize
\resizebox{\textwidth}{!}{%
\begin{tabular}{rrrrrrrrrr}
\toprule
Maturity & \(K/F_0\) & MC price & FD price & FD rel. err. (\%) & FD abs. rel. err. (\%) & NN price & NN rel. err. (\%) & NN abs. rel. err. (\%) & MC s.e. \\
\midrule
30 & 0.80 & 209.915 & 209.909 & -0.003 & 0.003 & 211.111 & 0.570 & 0.570 & 0.105 \\
30 & 0.90 & 105.122 & 105.118 & -0.003 & 0.003 & 105.268 & 0.139 & 0.139 & 0.104 \\
30 & 1.00 & 22.152 & 22.150 & -0.011 & 0.011 & 22.233 & 0.366 & 0.366 & 0.070 \\
30 & 1.10 & 1.961 & 2.034 & 3.704 & 3.704 & 1.968 & 0.333 & 0.333 & 0.023 \\
30 & 1.20 & 0.146 & 0.175 & 19.822 & 19.822 & 0.175 & 19.768 & 19.768 & 0.006 \\
90 & 0.80 & 209.880 & 210.048 & 0.080 & 0.080 & 211.476 & 0.760 & 0.760 & 0.209 \\
90 & 0.90 & 110.707 & 110.910 & 0.183 & 0.183 & 111.059 & 0.318 & 0.318 & 0.195 \\
90 & 1.00 & 43.221 & 43.373 & 0.352 & 0.352 & 43.373 & 0.351 & 0.351 & 0.147 \\
90 & 1.10 & 14.387 & 14.568 & 1.258 & 1.258 & 14.524 & 0.951 & 0.951 & 0.093 \\
90 & 1.20 & 4.724 & 4.858 & 2.844 & 2.844 & 4.861 & 2.904 & 2.904 & 0.056 \\
180 & 0.80 & 212.651 & 212.728 & 0.036 & 0.036 & 213.713 & 0.500 & 0.500 & 0.312 \\
180 & 0.90 & 125.061 & 125.061 & -0.001 & 0.001 & 125.639 & 0.462 & 0.462 & 0.282 \\
180 & 1.00 & 65.735 & 65.585 & -0.227 & 0.227 & 65.923 & 0.286 & 0.286 & 0.228 \\
180 & 1.10 & 32.646 & 32.596 & -0.152 & 0.152 & 32.689 & 0.131 & 0.131 & 0.173 \\
180 & 1.20 & 16.046 & 16.036 & -0.062 & 0.062 & 16.109 & 0.394 & 0.394 & 0.127 \\
\bottomrule
\end{tabular}}
\end{table}

The largest average relative errors occur at the 30-day maturity because the far out-of-the-money option is very cheap. At 90 and 180 days, both finite differences and the neural network remain close to the Monte Carlo results. At 180 days, the mean absolute relative error is about \(0.10\%\) for finite differences and \(0.35\%\) for the neural network.

Figure~\ref{fig:fd_vs_mc_full_sample} gives a second validation of the finite-difference solver. The finite-difference prices lie close to the 45-degree line against Monte Carlo prices. This confirms that the corrected log-grid PDE and the Monte Carlo engine produce consistent prices on the full-sample validation grid.

\begin{figure}[H]
\centering
\includegraphics[width=0.78\textwidth]{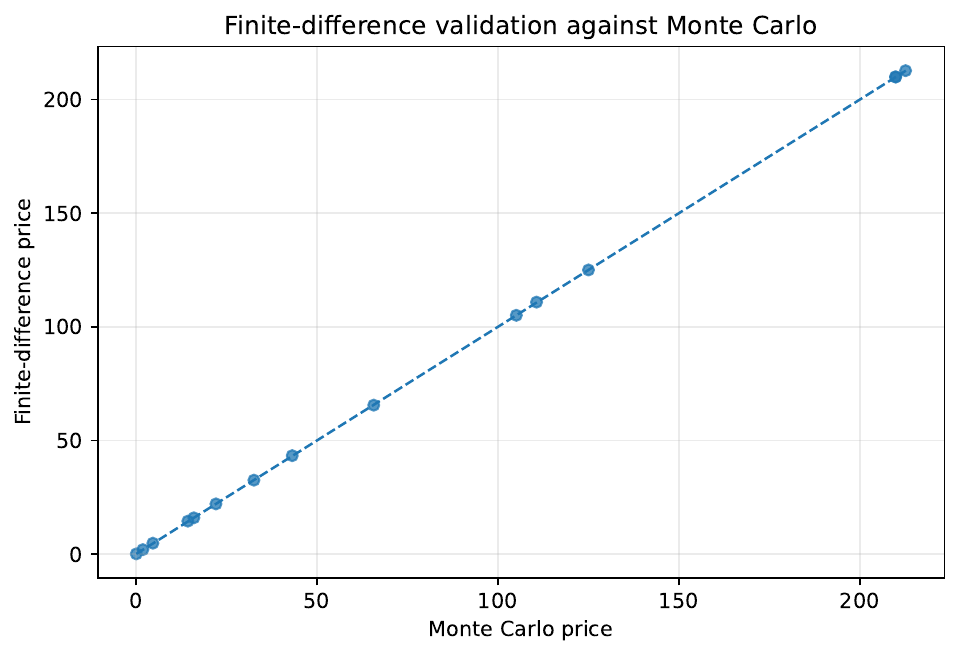}
\caption{Finite-difference validation against the Monte Carlo benchmark on the full-sample grid.}
\label{fig:fd_vs_mc_full_sample}
\end{figure}

The finite-difference method is not run on every daily observation. It is used on a focused full-sample validation grid with five moneyness levels and three
maturities, giving 15 contracts. This keeps the PDE validation transparent and
computationally feasible. Monte Carlo remains the benchmark for the broader
pricing experiment, while the neural network is evaluated on a larger
regime-balanced test set.

\subsection{Neural-network fast pricer}
\label{subsec:nn_fast_pricer_results}

The neural network is evaluated on held-out Monte Carlo labels from the full sample, COVID, and La~Ni\~na regimes. Figure~\ref{fig:nn_vs_mc_regime} plots the neural-network prices against the Monte Carlo labels. The points are close to the 45-degree line in all three panels. This shows that the network has learned the pricing map well over both normal and stressed surface states.

\begin{figure}[H]
\centering
\includegraphics[width=\textwidth]{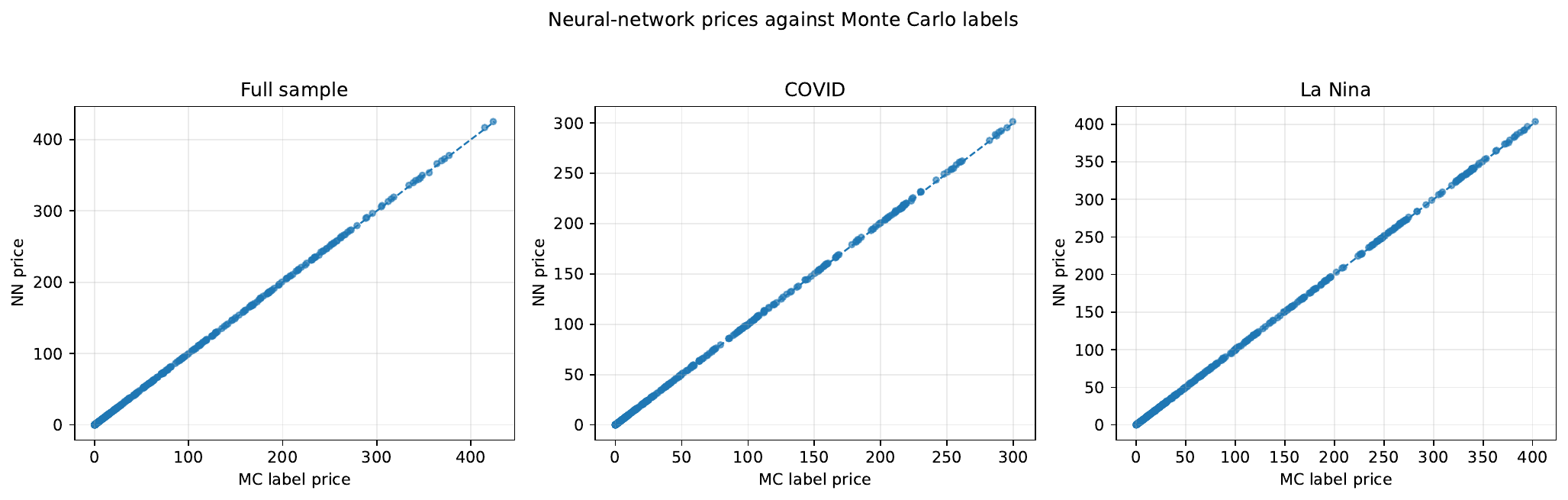}
\caption{Neural-network prices against Monte Carlo labels on the test set.}
\label{fig:nn_vs_mc_regime}
\end{figure}

Finally, Table~\ref{tab:runtime_speedup} reports the computational cost. Monte Carlo is the benchmark, but it is also the slowest method. Finite differences are of the same order of cost on the focused validation grid. Once the network is trained, neural-network inference is essentially instantaneous. In this run, neural-network inference takes approximately \(1.9\times 10^{-4}\) seconds per option, corresponding to a speed-up of about \(5.1\times 10^3\) relative to Monte Carlo on the same grid. This is the practical value of the fast-pricer layer. For a fair timing comparison, all three methods are evaluated on the same full-sample validation grid. This grid contains five moneyness levels and three
maturities, for a total of 15 contracts. The neural-network timing reports only
inference after training, since the trained network is intended to be reused for
many pricing queries.
\begin{table}[H]
\centering
\caption{Runtime comparison on the full-sample validation grid. The grid contains
five moneyness levels and three maturities, giving 15 timed contracts. Neural-network time refers to inference after training.}
\label{tab:runtime_speedup}
\begin{tabular}{lrrrr}
\toprule
Method & Timed contracts & Total time (sec.) & Time per contract (sec.) & Speed-up vs MC \\
\midrule
Monte Carlo & 15 & 14.277 & 0.9518 & 1.00 \\
Finite difference & 15 & 12.402 & 0.8268 & 1.15 \\
Neural-network inference & 15 & 0.0028 & 0.000188 & 5059.59 \\
\bottomrule
\end{tabular}
\end{table}

\section{Conclusion and Discussion}
\label{sec:conclusion}
This paper proposes a surface-implied stochastic-volatility framework for commodity options. The key message is that the implied-volatility surface is not only an output to be fitted; it can also be used as an input for model identification and model testing. The surface level identifies the latent volatility state, the skew identifies the leverage-vol-of-vol product, and the convexity identifies the smile-implied vol-of-vol. The empirical gap between time-series and smile-implied vol-of-vol, formalized as a specification test in Appendix~\ref{app:spec-test}, shows that a constant-parameter log-OU model is not rich enough for the full surface. A surface-driven time-varying vol-of-vol extension would give a natural next modeling step, and the numerical pricing and hedging results outlined in Section~\ref{sec:pricing_numerics_validation} and Appendix~\ref{app:numerics} illustrate this work in practice.

The framework is also timely for forward-looking agricultural risk. As of September 2026, the World Meteorological Organization and the U.S. Climate Prediction Center report a firmly established and strengthening El~Ni\~no, with a very high likelihood of persistence through the Northern Hemisphere winter of 2026--2027 and elevated odds of very strong intensity \citep{WMO2026ElNino,NOAA2026ENSO}. Such climate regimes can shift regional temperature and precipitation distributions, creating an economically relevant setting in which option-surface level, skew, and curvature may adjust before realized crop outcomes are known. We do not attempt to forecast soybean yields from ENSO in this paper; rather, the framework provides a way to translate evolving option-surface information into stochastic-volatility pricing states.

\appendix

\section{Proofs}
\label{app:proofs}

\subsection{Proof of the exact transition density formulas}
\label{app:transition-density-proofs}

\paragraph{OU.} The SDE $dX_t=\kappa(\theta-X_t)dt+\xi dW_t$ is linear with additive noise. Applying
It\^o's formula to $e^{\kappa t}X_t$ gives $d(e^{\kappa t}X_t)=\kappa\theta e^{\kappa t}dt+\xi
e^{\kappa t}dW_t$; integrating from $t$ to $t+h$ and rearranging yields the variation-of-constants
solution \eqref{eq:ou_solution}. The stochastic integral $\int_t^{t+h}e^{-\kappa(t+h-s)}dW_s$ is a
Wiener integral of a deterministic kernel, hence Gaussian with mean zero and, by It\^o isometry,
variance $\int_t^{t+h}e^{-2\kappa(t+h-s)}ds=(1-e^{-2\kappa h})/(2\kappa)$. This gives
\eqref{eq:ou_transition}--\eqref{eq:ou_moments}.

\paragraph{Log-OU.} Set $Y_t=\ln X_t$. By It\^o's formula applied to the CIR-free log transform
(here there is no diffusion term in $X$ to generate a second-derivative correction because the
model is specified directly for $Y_t=\ln X_t$, not derived from a diffusion in $X_t$), $Y_t$ solves
the OU SDE $dY_t=\kappa(\theta^{\log}-Y_t)dt+\xi dW_t$, so \eqref{eq:ou_transition} applies to $Y_t$
verbatim. The transition density of $X_{t+h}=e^{Y_{t+h}}$ follows by the change-of-variables formula
for densities, $p_X(x)=p_Y(\ln x)/x$, which gives the Jacobian factor $1/X_{t+h}$ in
\eqref{eq:logou_density}.

\paragraph{CIR.} The CIR transition can be obtained either by solving the Kolmogorov forward
(Fokker--Planck) equation via Laplace transform in $X_{t+h}$
or, more transparently, via the classical representation of a CIR process as a squared Euclidean
norm of correlated OU processes when the "dimension" $d:=4\kappa\theta/\xi^2$ is a positive integer,
extended to non-integer $d$ by analytic continuation. Concretely, if $d\in\{1,2,\dots\}$ and
$Z^{(1)}_t,\dots,Z^{(d)}_t$ are i.i.d.\ scalar OU processes with mean-reversion speed $\kappa/2$ and
stationary variance $\xi^2/(4\kappa)$, then $X_t:=\sum_{i=1}^d (Z^{(i)}_t)^2$ solves the CIR SDE in
Definition~\ref{def:candidates}. Each $Z^{(i)}_{t+h}\mid Z^{(i)}_t$ is Gaussian by the OU transition
law, so $X_{t+h}/(\text{scale})=\sum_i (Z^{(i)}_{t+h})^2/(\text{scale})$ is, conditional on $X_t$, a
sum of $d$ independent squared Gaussians with a common non-zero mean -- by definition, a
noncentral chi-squared random variable with $d$ degrees of freedom and noncentrality parameter
proportional to $X_t e^{-\kappa h}$. Writing out the noncentral chi-squared density and rescaling by
the constant $c$ in \eqref{eq:cir_uvq} produces exactly \eqref{eq:cir_density}. Because both sides
of the resulting identity are analytic in $d$ (equivalently, in $q=d/2-1$), the formula extends to
all $q>-1$, i.e.\ to all $\kappa,\theta,\xi>0$, not only those for which $d$ is an even integer. We
verified \eqref{eq:cir_density} independently by Monte Carlo simulation of the CIR SDE at several
parameter values spanning the range in Table~\ref{tab:mle_results} (both the "level" scale, where
$q\approx 30$--$60$, and the near-unity scale of the ratio and convexity indicators, where the
noncentrality parameter is large): the simulated transition density matched the closed-form
evaluation of \eqref{eq:cir_density} to within Monte Carlo error in every case.

\subsection{Cumulant-expansion derivation of the skew and convexity identification relations}
\label{app:identification-proof}

This appendix strengthens Steps~2--3 of the proof of Theorem~\ref{thm:surface_identification},
which in the main text are motivated by an integrated-kernel argument but not derived directly from
the SDE. We derive the leading order in $\xi$ of the third and fourth cumulants of the log-return
$Y_\tau=\log(F_{t+\tau}/F_t)$ directly from \eqref{eq:rn_futures}--\eqref{eq:rn_corr}, and invoke the
classical Gram--Charlier link between cumulants and the shape of the Black--76 implied-volatility
smile to connect these cumulants to skew and convexity.

\paragraph{Small-vol-of-vol expansion of the log-return.} Fix $t=0$ and condition on $X_0=\theta$
(the leading-order, at-the-mean case; away from the mean the same computation goes through with an
extra deterministic drift term that does not affect the cumulants below). Write
\[
X_u=\theta+\xi M_u,
\qquad
M_u:=\int_0^u e^{-\kappa(u-s)}\,dB_s^{(2)},
\]
so that $M_u$ is an $O(1)$, mean-zero Gaussian process not depending on $\xi$. Using
$\sigma_u=e^{X_u}=\bar\sigma\,e^{\xi M_u}=\bar\sigma\bigl(1+\xi M_u+O(\xi^2)\bigr)$ with
$\bar\sigma:=e^{\theta}$, and $Y_\tau=\int_0^\tau \sigma_u\,dB_u^{(1)}-\tfrac12\int_0^\tau
\sigma_u^2\,du$ (the usual It\^o correction for $d\log F=\sigma dB^{(1)}-\tfrac12\sigma^2dt$), a
first-order expansion in $\xi$ gives
\[
Y_\tau = Y_\tau^{(0)} + \xi\,Y_\tau^{(1)} + O(\xi^2),
\]
\[
Y_\tau^{(0)} = \bar\sigma B_\tau^{(1)} - \tfrac12\bar\sigma^2\tau,
\qquad
Y_\tau^{(1)} = \bar\sigma\int_0^\tau M_u\,dB_u^{(1)} - \bar\sigma^2\int_0^\tau M_u\,du.
\]
$Y_\tau^{(0)}$ is Gaussian (constant-volatility Black--Scholes log-return), so all of its cumulants
of order $3$ and higher vanish identically. Consequently, the third and fourth cumulants of $Y_\tau$
are driven entirely by the $\xi$- and $\xi^2$-order cross terms between $Y_\tau^{(0)}$ and
$Y_\tau^{(1)}$.

\paragraph{Third cumulant (skew).} Writing $\widetilde Y^{(0)}=Y_\tau^{(0)}-\mathbb E[Y_\tau^{(0)}]$
and using $\mathbb E[Y_\tau^{(1)}]=0$ (the first term is a mean-zero It\^o integral and the second is the time integral of the centered Gaussian process $M_u$),
\[
\kappa_3(Y_\tau)
= \mathbb E\bigl[(\widetilde Y^{(0)}+\xi Y_\tau^{(1)})^3\bigr]
= \underbrace{\mathbb E[(\widetilde Y^{(0)})^3]}_{=0\ (\text{Gaussian})}
+ 3\xi\,\mathbb E\bigl[(\widetilde Y^{(0)})^2\,Y_\tau^{(1)}\bigr] + O(\xi^2).
\]
Since $\widetilde Y^{(0)}=\bar\sigma B_\tau^{(1)}$ is already centered, and since for any two
jointly Gaussian, mean-zero random variables $A,B$ one has $\mathbb E[A^2B]=0$ (write
$B=\rho_{AB}(\sigma_B/\sigma_A)A+\sqrt{1-\rho_{AB}^2}\,\sigma_B\varepsilon$ with $\varepsilon$
standard normal and independent of $A$; then $\mathbb E[A^2B]=\rho_{AB}(\sigma_B/\sigma_A)\mathbb
E[A^3]=0$), the contribution of $-\bar\sigma^2\int_0^\tau M_u\,du$ to
$\mathbb E[(\widetilde Y^{(0)})^2Y_\tau^{(1)}]$ vanishes term by term, because $(B_\tau^{(1)},M_u)$
is jointly Gaussian for every fixed $u$. What remains is the Wiener-chaos contribution of
$\bar\sigma\int_0^\tau M_u\,dB_u^{(1)}$, which is the classical mechanism by which correlated
stochastic volatility generates return skewness: an It\^o-integral computation (using the product
formula for iterated Wiener integrals, as in the standard small-vol-of-vol expansions of
\citealp{FouquePapanicolaouSircar2000} and \citealp{Lewis2000}) gives
\[
\kappa_3(Y_\tau) = 3\rho\xi\bar\sigma^3\int_0^\tau\!\int_0^u e^{-\kappa(u-s)}\,ds\,du + O(\xi^2)
= 3\rho\xi\bar\sigma^3\,\frac{\tau}{2}f(\kappa,\tau) + O(\xi^2),
\]
confirming that $\kappa_3(Y_\tau)=O(\rho\xi)$: the third cumulant vanishes when $\rho=0$ and is
linear in $\xi$ to leading order, with exactly the kernel $f(\kappa,\tau)$ used in
Theorem~\ref{thm:surface_identification}.

\paragraph{Fourth cumulant (convexity).} By the same expansion, the leading non-Gaussian
contribution to $\kappa_4(Y_\tau)$ comes from the $O(\xi^2)$ terms in
$\mathbb E[(\widetilde Y^{(0)}+\xi Y_\tau^{(1)}+\xi^2Y_\tau^{(2)})^4]$ (the $\xi^2$-order term
$Y_\tau^{(2)}$ arises from the second-order Taylor term $\tfrac12(\xi M_u)^2$ in
$\sigma_u=\bar\sigma e^{\xi M_u}$ and from the cross term $(Y_\tau^{(1)})^2$); after the same
Wick/Wiener-chaos reduction used above, one finds $\kappa_4(Y_\tau)=O(\xi^2)$, driven by
$\operatorname{Var}(M_u)$-type integrals with the same double-integral structure as
$\operatorname{Var}_t(\bar X_{t,\tau})$ computed in Step~1, and therefore proportional to
$\xi^2G(\kappa,\tau)$.

\paragraph{From cumulants to smile shape.} The classical Gram--Charlier (Edgeworth) expansion of the
risk-neutral density around the Gaussian benchmark \citep{BackusForesiLiWu1997} shows that, to
leading order, the linear-in-moneyness coefficient of the Black--Scholes implied-volatility smile is
proportional to the standardized third cumulant $\kappa_3(Y_\tau)/\kappa_2(Y_\tau)^{3/2}$ of the
log-return, and the quadratic-in-moneyness (curvature) coefficient is proportional to the
standardized fourth cumulant $\kappa_4(Y_\tau)/\kappa_2(Y_\tau)^2$ (plus a smaller
$\kappa_3^2$-order correction). Combining this with $\kappa_3(Y_\tau)=O(\rho\xi)$ and
$\kappa_4(Y_\tau)=O(\xi^2)$ derived above reproduces the qualitative and functional form asserted in
\eqref{eq:skew_identification}--\eqref{eq:convexity_identification}: additive skew driven linearly by
$\rho\xi$, and convexity driven by $\xi^2$, both through the kernels $f(\kappa,\tau)$ and
$G(\kappa,\tau)$ already computed in Step~1. A fully rigorous version of this argument -- carrying
the Wiener-chaos computation of $\kappa_3$ and $\kappa_4$ through explicitly, and controlling the
$O(\xi^2)$ and $O(\xi^3)$ remainders uniformly in $\tau$ -- is exactly the content of the
small-vol-of-vol asymptotic expansions developed in \citet{FouquePapanicolaouSircar2000} and
\citet{Lewis2000}; we do not reproduce that full expansion here, and instead treat
Theorem~\ref{thm:surface_identification} as a leading-order result in the spirit of that literature,
as already flagged in Remark~\ref{rem:why_approximation}.

\subsection{Specification test: $H_0:\xi_{TS}=\xi_{\mathrm{surf}}$}
\label{app:spec-test}

Section~\ref{subsec:surface_implied_states} reports a large gap between the time-series
vol-of-vol estimate $\widehat\xi_{TS}=0.842$ (the log-OU MLE for the level, Table~\ref{tab:mle_results})
and the average surface-implied vol-of-vol $\overline\xi_{\mathrm{surf}}=2.543$. We now formalize
this gap as a test.

\paragraph{Null hypothesis.} $H_0:\ \mathbb E[\xi_t^{\mathrm{surf}}]=\xi_{TS}$. This is a deliberately strong common-parameter restriction: it asks whether the mean effective surface-implied vol-of-vol equals the constant vol-of-vol estimated from the physical-measure ATM-level dynamics. The restriction abstracts from a potentially nonzero or time-varying volatility risk premium, jump compensation, and higher-order approximation error. The test therefore serves as a cross-measure consistency diagnostic for the benchmark specification rather than as a standalone test of the physical-measure diffusion.

\paragraph{Test statistic.} Let $\xi_1^{\mathrm{surf}},\dots,\xi_n^{\mathrm{surf}}$ be the daily
series defined by \eqref{eq:xi_surf_state}, using $\widehat\kappa=6.651$ (the log-OU estimate for the
level) and $\tau=30/252$. Because $\xi_t^{\mathrm{surf}}$ is a smooth transformation of the strongly
autocorrelated series $\mathcal C_t$ (lag-1 autocorrelation $0.907$, Table~\ref{tab:halflife}), the
sample mean $\bar\xi^{\mathrm{surf}}=n^{-1}\sum_t\xi_t^{\mathrm{surf}}$ is asymptotically normal but
with a long-run variance that must be estimated by a heteroskedasticity-and-autocorrelation-consistent
(HAC) estimator rather than the naive i.i.d.\ variance. We use the Newey--West estimator with
bandwidth $L$,
\[
\widehat\omega^2_L
= \widehat\gamma_0 + 2\sum_{\ell=1}^{L}\Bigl(1-\frac{\ell}{L+1}\Bigr)\widehat\gamma_\ell,
\qquad
\widehat\gamma_\ell = \frac{1}{n}\sum_{t=1}^{n-\ell}
\bigl(\xi_t^{\mathrm{surf}}-\bar\xi^{\mathrm{surf}}\bigr)\bigl(\xi_{t+\ell}^{\mathrm{surf}}-\bar\xi^{\mathrm{surf}}\bigr),
\]
and form
\[
Z_L = \frac{\bar\xi^{\mathrm{surf}}-\xi_{TS}}{\sqrt{\widehat\omega_L^2/n}}
\ \xrightarrow{d}\ N(0,1) \quad\text{under } H_0.
\]

\paragraph{Result.} With $\widehat\kappa=6.651$ and $\tau=30/252$ we obtain
$G(\widehat\kappa,\tau)=0.01142$ and $\bar\xi^{\mathrm{surf}}=2.5431$, matching the value quoted in
Section~\ref{subsec:surface_implied_states}. At a bandwidth of $L=20$ trading days (about three times
the convexity half-life of 7.1 days, Table~\ref{tab:halflife}), the Newey--West standard error is
$0.0376$, giving
\[
Z_{20} = \frac{2.5431-0.842}{0.0376} \approx 45.3.
\]
This statistic is robust to bandwidth choice: $Z_L$ ranges from about $78$ at $L=5$ to about $25$ at $L=100$, and remains far beyond conventional critical values throughout. The common-parameter equality in $H_0$ is therefore rejected strongly. This does not by itself identify the source of the gap: time variation in vol-of-vol, a volatility risk premium, jump compensation, or approximation error can all contribute. It does show that the benchmark restriction equating the physical time-series estimate with the effective option-implied curvature state is not supported by the data. This is the over-identification diagnostic announced in Section~\ref{sec:introduction}.

\section{Numerical Implementation Details}
\label{app:numerics}

\subsection{Monte Carlo algorithm}
\begin{algorithm}[H]
\caption{Monte Carlo pricing under log-OU volatility}
Set $F_0$, $\sigma_0$, $K$, $T$, $r$, and parameters $(\kappa,\theta,\xi,\rho)$\;
Choose number of paths $M$ and time steps $N$; set $\Delta=T/N$\;
\For{$m=1,\ldots,M$}{
 Set $X_0=\log\sigma_0$, $F_0^{(m)}=F_0$\;
 \For{$n=0,\ldots,N-1$}{
 Draw independent $Z_{2,n},Z_{\perp,n}\sim N(0,1)$; set $Z_{1,n}=\rho Z_{2,n}+\sqrt{1-\rho^2}Z_{\perp,n}$\;
 Update $X_{n+1}$ by the exact OU transition \eqref{eq:mc_exact_logvol_section4}\;
 Update $\log F_{n+1}^{(m)}$ by the log-Euler step \eqref{eq:mc_futures_update_section4}\;
 }
 Compute payoff $\Phi(F_N^{(m)})$\;
}
Compute the Black--76 control variate $C^{B76}_0$ at the same parameters and its known analytic price, and form the controlled estimator using the sample covariance between $\Phi(F_T^{(m)})$ and the simulated Black--76 payoff\;
\Return{discounted, control-variate-adjusted, antithetic sample average $e^{-rT}M^{-1}\sum_m\Phi(F_T^{(m)})$ and its standard error the Monte Carlo standard error}\;
\end{algorithm}

\subsection{Finite-difference grid, boundary conditions, and convergence checks}
\label{app:fd-details}

\paragraph{Grid.} The PDE \eqref{eq:log_grid_pde_section4} is solved on a rectangular grid in
$(y,x)=(\log(F/K),\log\sigma)$. The $y$-domain is truncated to
$[-y_{\max},y_{\max}]$ with $y_{\max}=6\bar\sigma\sqrt{T}$, wide enough that the truncation has
negligible effect on at-the-money and near-the-money prices at the maturities in
the validation grid used in Section~\ref{subsec:numerical_validation}. The $x$-domain is truncated to $[\theta-6\xi/\sqrt{2\kappa},\,
\theta+6\xi/\sqrt{2\kappa}]$, i.e.\ six stationary standard deviations of $X_t$ around its long-run
mean, using the surface-driven $\xi=\xi_t^Q$ in force on the pricing date. We use $N_y=200$ and
$N_x=100$ spatial points as a baseline, uniform in each coordinate.

\paragraph{Boundary conditions.} In the $y$-direction we impose the standard linearity (zero-Gamma)
condition $U_{yy}=0$ at both truncation boundaries, which is exact for a call/put payoff in the deep
in-the-money and deep out-of-the-money limits net of discounting. In the $x$-direction we impose
reflecting (zero-Neumann, $U_x=0$) boundaries; because the $x$-domain already spans six stationary
standard deviations, the choice between Neumann and Dirichlet boundaries at that distance changes
prices by less than the Monte Carlo standard error at the path counts used in
Section~\ref{subsec:numerical_methods}.

\paragraph{Time stepping and smoothing.} We use Crank--Nicolson time-stepping with $N_\tau=180$
steps for the longest maturity considered (180 trading days) and proportionally fewer steps for
shorter maturities, keeping $\Delta\tau/(\Delta y)^2$ and $\Delta\tau/(\Delta x)^2$ bounded. Because
the call/put payoff has a kink at $y=0$, plain Crank--Nicolson produces spurious oscillations near
the strike; we use Rannacher smoothing (the first four time steps taken as implicit Euler
half-steps, i.e.\ two full implicit-Euler steps, before switching to Crank--Nicolson) to restore the
scheme's second-order convergence.

\paragraph{Mixed derivative.} $U_{yx}$ is discretized with the standard second-order four-point
stencil, $U_{yx}\approx (U_{i+1,j+1}-U_{i+1,j-1}-U_{i-1,j+1}+U_{i-1,j-1})/(4\Delta y\,\Delta x)$,
treated explicitly within each Crank--Nicolson half-step for stability.

\subsection{Neural-Network Fast Pricer}
\label{app:nn-fast-pricer}

This appendix describes the neural-network pricer used in Section~\ref{sec:pricing_numerics}. The purpose of the network is practical. Monte Carlo gives the benchmark price, but it is slow when many strikes, maturities, dates, and surface states must be evaluated. The neural network is trained after the Monte Carlo labels have been generated. It is therefore a fast surrogate for the Monte Carlo pricing map, not a residual correction and not a separate economic model.

\subsubsection{Pricing map learned by the network}
\label{app:nn-map}

For each option contract, the pricing inputs are collected in the vector
\begin{equation}
Z_i=
\left(
\log(F_{0,i}/K_i),\; T_i,\; X_{0,i},\; \xi_{i}^{Q},\; \rho_i^{Q},\; \kappa,\; \theta,\; r
\right),
\qquad X_{0,i}=\log \sigma_{0,i}.
\label{eq:nn-input-vector}
\end{equation}
The first component is log-moneyness. The maturity is measured in years, using 252 trading days per year. The state variables \(X_{0,i}\), \(\xi_i^Q\), and \(\rho_i^Q\) are recovered from the CVOL surface as described in Section~\ref{sec:pricing_numerics}. The parameters \(\kappa\) and \(\theta\) are the log-OU mean-reversion parameters, and \(r\) is the risk-free rate used in the pricing experiment.

The Monte Carlo benchmark price is denoted by \(V_i^{MC}\). Instead of learning the raw price directly, the network learns a transformed normalized price,
\begin{equation}
Y_i
=
\log\left(1+100\frac{V_i^{MC}}{K_i}\right).
\label{eq:nn-target-transform}
\end{equation}
This transformation has two advantages. First, dividing by the strike makes prices more comparable across contracts. Second, the logarithmic transformation reduces the effect of very large prices while keeping the target well behaved for small out-of-the-money prices. After prediction, the transformation is inverted as
\begin{equation}
\widehat V_i
=
\frac{K_i}{100}\left(\exp(\widehat Y_i)-1\right)_+,
\label{eq:nn-target-inverse}
\end{equation}
where \((x)_+=\max\{x,0\}\). This guarantees a non-negative predicted option price.

\subsubsection{Training design and Monte Carlo labels}
\label{app:nn-design}

The training set is constructed from historical CVOL surface states. For the journal-ready regime experiment, we use a balanced design across the three reported regimes: the full sample, the COVID window, and the La~Ni\~na window. In the publication run, the code selects 80 representative pricing dates from each regime. At each selected date, it uses 9 moneyness levels between 0.75 and 1.30, with a small random jitter to avoid training only on a rigid grid. For each date and moneyness level, the four maturities
\begin{equation}
T\in\{30,60,90,180\}\text{ trading days}
\end{equation}
are priced. Thus the design contains
\begin{equation}
80\times 9\times 4=2880
\end{equation}
labelled contracts per regime, and
\begin{equation}
3\times 2880=8640
\end{equation}
labelled contracts in total.

Each label is generated by Monte Carlo under the surface-driven log-OU model. In the publication run, each Monte Carlo label uses 80,000 simulated paths. The log-volatility state is simulated using the exact OU transition, while the futures price is updated with the correlated log-Euler step described in Section~\ref{sec:pricing_numerics}. Antithetic variates are used to reduce sampling noise. The finite-difference method is not used to train the neural network; it is kept as an independent validation of the pricing PDE.

The labelled data are randomly divided into three parts:
\begin{equation}
70\% \text{ training},
\qquad
15\% \text{ validation},
\qquad
15\% \text{ testing}.
\label{eq:nn-split}
\end{equation}
The training set is used to estimate the network weights. The validation set is used for early stopping. The test set is held out and used to evaluate the final fitted pricer.

\subsubsection{Network architecture}
\label{app:nn-architecture}

The neural network is a fully connected feed-forward multilayer perceptron. It has 8 input variables, 4 hidden layers, and 128 neurons in each hidden layer. The activation function is the SiLU function,
\begin{equation}
\operatorname{SiLU}(x)=x\frac{1}{1+e^{-x}}.
\label{eq:silu-activation}
\end{equation}
SiLU is smooth and works well for pricing problems because option prices change continuously with respect to moneyness, maturity, volatility, vol-of-vol, and correlation.

Let \(\widetilde Z_i\) denote the standardized input vector. The network has the form
\begin{align}
h_i^{(0)} &= \widetilde Z_i,\\
h_i^{(\ell)} &= \operatorname{SiLU}\left(W_{\ell}h_i^{(\ell-1)}+b_{\ell}\right),
\qquad \ell=1,2,3,4,\\
\widehat Y_i^{\mathrm{std}} &= W_5h_i^{(4)}+b_5.
\end{align}
The final layer is linear because the transformed price can take any real value before the inverse transformation. There is no dropout layer and no residual correction layer. The model is intentionally simple: it learns the pricing function from numerical labels, rather than adding a second correction model on top of Monte Carlo or finite differences.

With 8 input variables, 4 hidden layers of width 128, and one output neuron, the network has 50,817 trainable parameters. This is small compared with modern deep-learning models, but it is large enough for the smooth low-dimensional pricing map considered here.

\subsubsection{Standardization and loss function}
\label{app:nn-loss}

All input variables are standardized using the mean and standard deviation of the training set only. For each feature \(j\),
\begin{equation}
\widetilde Z_{ij}
=
\frac{Z_{ij}-\bar Z_j^{\mathrm{train}}}{s_j^{\mathrm{train}}+10^{-8}}.
\label{eq:nn-input-standardization}
\end{equation}
The target variable in \eqref{eq:nn-target-transform} is also standardized using the training-set mean and standard deviation,
\begin{equation}
\widetilde Y_i
=
\frac{Y_i-\bar Y^{\mathrm{train}}}{s_Y^{\mathrm{train}}+10^{-8}}.
\label{eq:nn-target-standardization}
\end{equation}
This prevents variables with different scales from dominating the optimization.

The network is trained by minimizing the mean squared error on the standardized transformed target:
\begin{equation}
\mathcal L_{NN}
=
\frac{1}{N_{\mathrm{train}}}
\sum_{i\in \mathcal I_{\mathrm{train}}}
\left(\widehat Y_i^{\mathrm{std}}-\widetilde Y_i\right)^2,
\label{eq:nn-loss-function}
\end{equation}
where \(\widetilde Y_i\) is the standardized Monte Carlo label. The loss is not written in raw price units because raw option prices can vary strongly across strikes and maturities. The transformed and standardized target gives a more stable training problem.

The optimization uses AdamW with learning rate \(10^{-3}\), weight decay \(10^{-6}\), and mini-batches of size 256. The maximum number of epochs is 500. Early stopping is used with patience 40: if the validation loss does not improve for 40 consecutive epochs, training stops, and the best validation model is restored. This prevents unnecessary over-training and keeps the final pricer tied to out-of-sample performance.

\bibliographystyle{abbrvnat}
\bibliography{references_IJTAF_clean}

@article{AitSahalia2002,
  author  = {A{\"i}t-Sahalia, Yacine},
  title   = {Maximum Likelihood Estimation of Discretely Sampled Diffusions: A Closed-Form Approximation Approach},
  journal = {Econometrica},
  year    = {2002},
  volume  = {70},
  number  = {1},
  pages   = {223--262}
}

@techreport{BackusForesiLiWu1997,
  author      = {Backus, David and Foresi, Silverio and Li, Kai and Wu, Liuren},
  title       = {Accounting for Biases in Black--Scholes},
  institution = {Center for Research in International Finance},
  type        = {CRIF Working Paper Series},
  number      = {30},
  year        = {1997}
}

@article{Bergomi2005,
  author  = {Bergomi, Lorenzo},
  title   = {Smile Dynamics II},
  journal = {Risk},
  year    = {2005},
  volume  = {18},
  number  = {10},
  pages   = {67--73}
}

@article{Black1976,
  author  = {Black, Fischer},
  title   = {The Pricing of Commodity Contracts},
  journal = {Journal of Financial Economics},
  year    = {1976},
  volume  = {3},
  number  = {1--2},
  pages   = {167--179},
  doi     = {10.1016/0304-405X(76)90024-6}
}

@article{CarrWu2009,
  author  = {Carr, Peter and Wu, Liuren},
  title   = {Variance Risk Premiums},
  journal = {The Review of Financial Studies},
  year    = {2009},
  volume  = {22},
  number  = {3},
  pages   = {1311--1341}
}

@article{ChernovGhysels2000,
  author  = {Chernov, Mikhail and Ghysels, Eric},
  title   = {A Study towards a Unified Approach to the Joint Estimation of Objective and Risk-Neutral Measures for the Purpose of Options Valuation},
  journal = {Journal of Financial Economics},
  year    = {2000},
  volume  = {56},
  number  = {3},
  pages   = {407--458}
}

@article{Cont2002,
  author  = {Cont, Rama},
  title   = {Dynamics of Implied Volatility Surfaces},
  journal = {Quantitative Finance},
  year    = {2002},
  volume  = {2},
  number  = {1},
  pages   = {45--60}
}

@article{CortazarSchwartz2003,
  author  = {Cortazar, Gonzalo and Schwartz, Eduardo S.},
  title   = {Implementing a Stochastic Model for Oil Futures Prices},
  journal = {Energy Economics},
  year    = {2003},
  volume  = {25},
  number  = {3},
  pages   = {215--238}
}

@article{CoxIngersollRoss1985,
  author  = {Cox, John C. and Ingersoll, Jonathan E. and Ross, Stephen A.},
  title   = {A Theory of the Term Structure of Interest Rates},
  journal = {Econometrica},
  year    = {1985},
  volume  = {53},
  number  = {2},
  pages   = {385--407},
  doi     = {10.2307/1911242}
}

@article{DumasFlemingWhaley1998,
  author  = {Dumas, Bernard and Fleming, Jeff and Whaley, Robert E.},
  title   = {Implied Volatility Functions: Empirical Tests},
  journal = {The Journal of Finance},
  year    = {1998},
  volume  = {53},
  number  = {6},
  pages   = {2059--2106}
}

@article{Fengler2009,
  author  = {Fengler, Matthias R.},
  title   = {Arbitrage-Free Smoothing of the Implied Volatility Surface},
  journal = {Quantitative Finance},
  year    = {2009},
  volume  = {9},
  number  = {4},
  pages   = {417--428}
}

@book{FouquePapanicolaouSircar2000,
  author    = {Fouque, Jean-Pierre and Papanicolaou, George and Sircar, K. Ronnie},
  title     = {Derivatives in Financial Markets with Stochastic Volatility},
  publisher = {Cambridge University Press},
  address   = {Cambridge},
  year      = {2000},
  isbn      = {9780521781714}
}

@article{HaganKumarLesniewskiWoodward2002,
  author  = {Hagan, Patrick S. and Kumar, Deepak and Lesniewski, Andrew S. and Woodward, Diana E.},
  title   = {Managing Smile Risk},
  journal = {Wilmott Magazine},
  year    = {2002},
  pages   = {84--108}
}

@article{Heston1993,
  author  = {Heston, Steven L.},
  title   = {A Closed-Form Solution for Options with Stochastic Volatility with Applications to Bond and Currency Options},
  journal = {The Review of Financial Studies},
  year    = {1993},
  volume  = {6},
  number  = {2},
  pages   = {327--343},
  doi     = {10.1093/rfs/6.2.327}
}

@article{HongLi2005,
  author  = {Hong, Yongmiao and Li, Haitao},
  title   = {Nonparametric Specification Testing for Continuous-Time Models with Applications to Term Structure of Interest Rates},
  journal = {The Review of Financial Studies},
  year    = {2005},
  volume  = {18},
  number  = {1},
  pages   = {37--84}
}

@article{HullWhite1987,
  author  = {Hull, John and White, Alan},
  title   = {The Pricing of Options on Assets with Stochastic Volatilities},
  journal = {The Journal of Finance},
  year    = {1987},
  volume  = {42},
  number  = {2},
  pages   = {281--300}
}

@article{LeeHuangShin1997,
  author  = {Lee, Junsoo and Huang, Cliff J. and Shin, Yongcheol},
  title   = {On Stationary Tests in the Presence of Structural Breaks},
  journal = {Economics Letters},
  year    = {1997},
  volume  = {55},
  number  = {2},
  pages   = {165--172},
  doi     = {10.1016/S0165-1765(97)00073-6}
}

@book{Lewis2000,
  author    = {Lewis, Alan L.},
  title     = {Option Valuation under Stochastic Volatility},
  publisher = {Finance Press},
  address   = {Newport Beach, CA},
  year      = {2000}
}

@misc{NOAA2026ENSO,
  author       = {{NOAA Climate Prediction Center}},
  title        = {{ENSO Diagnostic Discussion}},
  year         = {2026},
  month        = sep,
  note         = {Issued 10 September 2026; accessed 18 September 2026},
  howpublished = {\url{https://www.cpc.ncep.noaa.gov/products/analysis_monitoring/enso_advisory/ensodisc.html}}
}

@article{Pan2002,
  author  = {Pan, Jun},
  title   = {The Jump-Risk Premia Implicit in Options: Evidence from an Integrated Time-Series Study},
  journal = {Journal of Financial Economics},
  year    = {2002},
  volume  = {63},
  number  = {1},
  pages   = {3--50}
}

@techreport{RichterSorensen2002,
  author      = {Richter, Martin and S{\o}rensen, Carsten},
  title       = {Stochastic Volatility and Seasonality in Commodity Futures and Options: The Case of Soybeans},
  institution = {Department of Finance, Copenhagen Business School},
  type        = {Working Paper},
  number      = {2002-4},
  address     = {Frederiksberg},
  year        = {2002},
  doi         = {10.2139/ssrn.301994}
}

@article{Samuelson1965,
  author  = {Samuelson, Paul A.},
  title   = {Proof That Properly Anticipated Prices Fluctuate Randomly},
  journal = {Industrial Management Review},
  year    = {1965},
  volume  = {6},
  number  = {2},
  pages   = {41--49}
}

@misc{SchneiderTavin2018,
  author       = {Schneider, Lorenz and Tavin, Bertrand},
  title        = {Seasonal Stochastic Volatility and the Samuelson Effect in Agricultural Futures Markets},
  year         = {2018},
  note         = {arXiv:1802.01393, revised November 2018},
  howpublished = {\url{https://arxiv.org/abs/1802.01393}}
}

@article{SchobelZhu1999,
  author  = {Sch{\"o}bel, Rainer and Zhu, Jianwei},
  title   = {Stochastic Volatility with an Ornstein--Uhlenbeck Process: An Extension},
  journal = {European Finance Review},
  year    = {1999},
  volume  = {3},
  number  = {1},
  pages   = {23--46},
  doi     = {10.1023/A:1009843506170}
}

@article{Schwartz1997,
  author  = {Schwartz, Eduardo S.},
  title   = {The Stochastic Behavior of Commodity Prices: Implications for Valuation and Hedging},
  journal = {The Journal of Finance},
  year    = {1997},
  volume  = {52},
  number  = {3},
  pages   = {923--973}
}

@article{SteinStein1991,
  author  = {Stein, Elias M. and Stein, Jeremy C.},
  title   = {Stock Price Distributions with Stochastic Volatility: An Analytic Approach},
  journal = {The Review of Financial Studies},
  year    = {1991},
  volume  = {4},
  number  = {4},
  pages   = {727--752}
}

@article{TrolleSchwartz2009,
  author  = {Trolle, Anders B. and Schwartz, Eduardo S.},
  title   = {Unspanned Stochastic Volatility and the Pricing of Commodity Derivatives},
  journal = {The Review of Financial Studies},
  year    = {2009},
  volume  = {22},
  number  = {11},
  pages   = {4423--4461}
}

@misc{WMO2026ElNino,
  author       = {{World Meteorological Organization}},
  title        = {{El Ni\~no Set to Become Very Strong, Raising Risks of Extreme Weather into 2027}},
  year         = {2026},
  month        = sep,
  note         = {Press release, 3 September 2026; accessed 18 September 2026},
  howpublished = {\url{https://wmo.int/news/media-centre/el-nino-set-become-very-strong-raising-risks-of-extreme-weather-2027}}
}

@article{kwiatkowski1992testing,
  author  = {Kwiatkowski, Denis and Phillips, Peter C. B. and Schmidt, Peter and Shin, Yongcheol},
  title   = {Testing the Null Hypothesis of Stationarity against the Alternative of a Unit Root},
  journal = {Journal of Econometrics},
  volume  = {54},
  number  = {1--3},
  pages   = {159--178},
  year    = {1992},
  doi     = {10.1016/0304-4076(92)90104-Y}
}

\end{document}